\newcommand{\mathsym}[1]{{}}
\let\oldbfseries=\bfseries
\let\oldmdseries=\mdseries
\let\oldnormalfont=\normalfont
\renewcommand{\bfseries}{\oldbfseries\boldmath}
\renewcommand{\mdseries}{\oldmdseries\unboldmath}
\renewcommand{\normalfont}{\oldnormalfont\unboldmath}
\numberwithin{equation}{section}
\newcommand\hypersetup[1]{}\fi
\DeclareMathSymbol{\Gamma}{\mathalpha}{letters}{"00}
\DeclareMathSymbol{\Delta}{\mathalpha}{letters}{"01}
\DeclareMathSymbol{\Theta}{\mathalpha}{letters}{"02}
\DeclareMathSymbol{\Lambda}{\mathalpha}{letters}{"03}
\DeclareMathSymbol{\Xi}{\mathalpha}{letters}{"04}
\DeclareMathSymbol{\Pi}{\mathalpha}{letters}{"05}
\DeclareMathSymbol{\Sigma}{\mathalpha}{letters}{"06}
\DeclareMathSymbol{\Upsilon}{\mathalpha}{letters}{"07}
\DeclareMathSymbol{\Phi}{\mathalpha}{letters}{"08}
\DeclareMathSymbol{\Psi}{\mathalpha}{letters}{"09}
\DeclareMathSymbol{\Omega}{\mathalpha}{letters}{"0A}
\newcommand{\dd}{\mathrm{d}}
\newcommand{\ii}{\mathrm{i}}
\newcommand*\widebar[1]{%
  \hbox{%
    \vbox{%
      \hrule height 0.5pt % The actual bar
      \kern0.25ex%         % Distance between bar and symbol
      \hbox{%
        \kern-0.3em%      % Shortening on the left side
        \ensuremath{#1}%
        \kern-0.1em%      % Shortening on the right side
      }%
    }%
  }%
}
\newcommand{\ket}[1]{\left|#1\right\rangle}      % Ket-Zustand
\newcommand{\bra}[1]{\left\langle #1\right|}     % Bra-Zustand
\newcommand{\alg}[1]{\mathfrak{#1}}
\newcommand{\beq}{\begin{equation}}
\newcommand{\eeq}{\end{equation}}
\def\[{\begin{equation}}
\def\]{\end{equation}}
\def\<{\begin{eqnarray}}
\def\>{\end{eqnarray}}
\newtheorem{lemma}{Lemma} 
\newtheorem{remark}{Remark}
\def\mr@ignsp#1 {\ifx\:#1\@empty\else #1\expandafter\mr@ignsp\fi}%
\newcommand{\multiref}[1]{\begingroup%\let\protect\string%
\xdef\mr@no@sparg{\expandafter\mr@ignsp#1 \: }%
\def\mr@comma{}%
\@for\mr@refs:=\mr@no@sparg\do{\mr@comma\def\mr@comma{,}\ref{\mr@refs}}%
\endgroup}
\newcommand{\hypref}[2]{\ifx\href\asklfhas #2\else\href{#1}{#2}\fi}
\newcommand{\Secref}[1]{Section~\multiref{#1}}
\newcommand{\Appref}[1]{Appendix~\multiref{#1}}
\newcommand{\Figref}[1]{Figure~\multiref{#1}}
\renewcommand{\eqref}[1]{(\multiref{#1})}
\newlength{\apb@width}
\newcommand{\autoparbox}[2][c]{\settowidth{\apb@width}{#2}\parbox[#1]{\apb@width}{#2}}
\asklfhas\newcommand{\href}[2]{#2}\fi
\begin{document}

\renewcommand{\thefootnote}{\fnsymbol{footnote}}
%\pagenumbering{roman}
\thispagestyle{empty}
\begin{flushright}\footnotesize
ITP-UU-12/47 \\
SPIN-12/44
\end{flushright}
\vspace{1cm}

\begin{center}%
{\Large\bfseries%
\hypersetup{pdftitle={Scalar product of Bethe vectors from functional equations}}%
Scalar product of Bethe vectors \\ from functional equations%
\par} \vspace{2cm}%

\textsc{W. Galleas}\vspace{5mm}%
\hypersetup{pdfauthor={Wellington Galleas}}%

\textit{Institute for Theoretical Physics and Spinoza Institute, \\ Utrecht University, Leuvenlaan 4,
3584 CE Utrecht, \\ The Netherlands}\vspace{3mm}%

\verb+w.galleas@uu.nl+ %

\par\vspace{3cm}

\textbf{Abstract}\vspace{7mm}

\begin{minipage}{12.7cm}
In this work the scalar product of Bethe vectors for the six-vertex model
is studied by means of functional equations. The scalar products are
shown to obey a system of functional equations originated from the Yang-Baxter
algebra and its solution is given as a multiple contour integral.

\hypersetup{pdfkeywords={Six-vertex model, functional equations, scalar product}}%
\hypersetup{pdfsubject={}}%
%\textbf{MSC Classes:}
%17B62, %Lie bialgebras; Lie coalgebras
%81U15, %Exactly and quasi-solvable systems
%17B67, %Kac-Moody (super)algebras; extended affine Lie algebras; toroidal Lie algebras
%17B25. %Exceptional (super)algebras
\end{minipage}
\vskip 2cm
{\small PACS numbers:  05.50+q, 02.30.IK}
\vskip 0.1cm
{\small Keywords: Six-vertex model, functional equations, scalar product}
\vskip 2cm
{\small December 2012}

\end{center}

%%%%%%%%%%%%%%%%%%%%%%%%%%%%%%%%%%%%%%%%%%%%%%%%%%%%%%%%%%%%%%%%%%%%%%%%%%%%%%%%
\newpage
%\pagenumbering{arabic}	
%\setcounter{page}{1}
\renewcommand{\thefootnote}{\arabic{footnote}}
\setcounter{footnote}{0}

\tableofcontents

%%%%%%%%%%%%%%%%%%%%%%%%%%%%%%%%%%%%%%%%%%%%%%%%%%%%%%%%%%%%%%%%%%%%%%%%%%%%%%%%
\section{Introduction}
\label{sec:intro}

Some important models of quantum field theory and condensed matter physics
exhibit remarkable properties granting them the gift of integrability.
Among prominent examples we have the quantum non-linear Schr\"odinger model
\cite{Sklyanin_1979}, the Heisenberg spin chain \cite{Heisenberg_1928} and 
the Hubbard model \cite{Hubbard_1963}.
The realm of those models are of quantum nature and, although a rigorous and
unambiguous definition of integrability at quantum level is still lacking 
\cite{Caux_2011, Weigert_1992, Faddeev_2004, Marmo_2009}, these models exhibit 
a set of enhanced symmetries which can be explored in order to compute physical
quantities exactly \cite{Korepin_book, Hubb_book}. Among those quantities we mention
the energy spectrum and some correlation functions, and in this way integrability
offers non-perturbative access to the behaviour of strongly interacting systems.
These achievements are in large part due to the advent of the Bethe ansatz \cite{Bethe_1931}
and its algebraic formulation \cite{Sk_Faddeev_1979, Takh_Faddeev_1979} characterising the model
wave function in terms of its momentum. This algebraic formulation has
been successfully applied to the majority of known integrable systems 
\cite{Tarasov_1988, Ramos_1998, Galleas_2004, Melo_2009}, yielding the model
eigenvectors in terms of creation operators acting on a suitable reference state.
Moreover, as remarked in \cite{McCoy_2001} the study of Bethe's wave function also provided
important insights leading to the concept of commuting transfer matrices \cite{Baxter_1971}.

On the other hand, the exact solution of a model whose transfer matrix belongs to a 
commutative family is also intimately connected with functional equations methods \cite{Baxter_1972}.
As far as spectral properties are concerned, we have available a variety of functional methods
\cite{Baxter_1972, Reshet_1987, Stroganov_1979, Galleas_2008} yielding transfer matrices
eigenvalues which do not address the problem of explicitly constructing the respective eigenvectors. 
In fact, the evaluation of most physical quantities do not require the eigenvectors
themselves but quantities which can be derived from them. For instance, the calculation of 
correlation functions would then require the evaluation of eigenvectors scalar products
in addition to the expected value of operators \cite{Korepin_book}.

Within the framework of the algebraic Bethe ansatz, the calculation of scalar product of Bethe
vectors was inaugurated by Korepin \cite{Korepin82} largely influenced by Gaudin's hypothesis
on the norms of the non-linear Schr\"odinger model wave function \cite{gaudin_book}.
The method of \cite{Korepin82} has also been described in \cite{Korepin_book} where the scalar
product of Bethe vectors for the six-vertex model is given as the vacuum mean value of a determinant
whose entries are expressed in terms of quantum fields. In \cite{Korepin_book, Maillet_1999} that scalar product
is also given as a summation over a product of determinants with scalar entries.

The purpose of this paper is to demonstrate that scalar product of Bethe vectors
can also be computed by means of functional equations. The method we shall employ here
resembles the one described in the series of works \cite{Galleas10, Galleas11,Galleas_2011,Galleas_2012}
where the Yang-Baxter algebra has played the major role in deriving functional equations describing
the partition functions of vertex and SOS models with domain wall boundaries.
The origin of the functional equations describing scalar products is also the
Yang-Baxter algebra and its solution turns out to be given by a multiple contour
integral. 

This paper is organised as follows. In \Secref{sec:bethe} we give a brief description
of Bethe vectors for the six-vertex model. In \Secref{sec:FN} we illustrate how the 
Yang-Baxter algebra can be explored in order to derive functional equations describing
scalar products. The solution of our functional equations is obtained in \Secref{sec:scalar}
and concluding remarks are discussed in \Secref{sec:conclusion}. Technical details and
proofs are presented through the Appendices A to E.

%%%%%%%%%%%%%%%%%%%%%%%%%%%%%%%%%%%%%%%%%%%%%%%%%%%%%%%%%%%%%%%%%%%%%%%%%%%%%%%%
\section{Bethe vectors for the six-vertex model}
\label{sec:bethe}

Vertex models in two dimensions constitute one of the corner stones of the
theory of exactly solvable models of statistical mechanics \cite{Baxter_book}.
In particular, the study of the eigenvectors of the six-vertex model \cite{Lieb_1967}
by means of the Bethe ansatz \cite{Bethe_1931} provided insightful information
which paved the way to establish the connection between two-dimensional vertex
models and one-dimensional spin chains \cite{McCoy_1968, Sutherland_1970, Baxter_1971}.
This study also received a large impulse with the advent of the Quantum Inverse 
Scattering Method (QISM) which unveiled the algebraic foundation supporting
the construction of Bethe vectors \cite{Sk_Faddeev_1979, Takh_Faddeev_1979}.  

\paragraph{Bethe vectors.} In the framework of the QISM, the eigenvectors of the six-vertex
model are built up from the action of creation operators on a pseudo-vacuum state. 
More precisely, an eigenvector $\ket{\psi}$ is of the form
\[
\label{psi}
\ket{\psi} = B(\lambda_1^B) \dots B(\lambda_n^B) \ket{0}
\]
where $B(\lambda)$, together with three more generators $A(\lambda)$, $C(\lambda)$
and $D(\lambda)$, satisfy a certain set of algebraic relations.
In its turn, the vector $\ket{0}$ is the $\alg{sl}_2$ highest weight vector
while the parameters $\lambda_j^B \in \mathbb{C}$ are required to satisfy certain 
constraints. 

\medskip

\paragraph{Yang-Baxter algebra.} The algebra satisfied by the generators $A(\lambda)$, 
$B(\lambda)$, $C(\lambda)$ and $D(\lambda)$ is commonly referred to as Yang-Baxter algebra
and it reads
\[
\label{yb_alg}
\mathcal{R}_{12} (\lambda_1 - \lambda
_2) \mathcal{T}_1 (\lambda_1) \mathcal{T}_2 (\lambda_2) = \mathcal{T}_2 (\lambda_2) \mathcal{T}_1 (\lambda_1) \mathcal{R}_{12} (\lambda_1 - \lambda_2) \; .
\]
The relation (\ref{yb_alg}) is defined in $\mbox{End} (\mathbb{V}_1 \otimes \mathbb{V}_2 )$ with $\mathbb{V}_j \cong \mathbb{C}^2$.
In their turn, $\mathcal{T}_1 = \mathcal{T} \otimes \mbox{id}$ and $\mathcal{T}_2 = \mbox{id} \otimes \mathcal{T}$ with
\<
\label{abcd}
\mathcal{T} (\lambda) = \left( \begin{matrix}
A(\lambda) & B(\lambda) \cr
C(\lambda) & D(\lambda) \end{matrix} \right) \; .
\>
The matrix $\mathcal{R}_{a b} \in \mbox{End} (\mathbb{V}_a \otimes \mathbb{V}_b)$ encodes the algebra structure constants
associated with the six-vertex model and it is explicitly given by 
\<
\label{rmat}
\mathcal{R} = \left( \begin{matrix}
a & 0 & 0 & 0 \cr
0 & b & c & 0 \cr
0 & c & b & 0 \cr
0 & 0 & 0 & a  \end{matrix} \right)
\>
where $a(\lambda) = \sinh{(\lambda + \gamma)}$, $b(\lambda) = \sinh{(\lambda)}$ and $c(\lambda) = \sinh{(\gamma)}$. 
The $\mathcal{R}$-matrix (\ref{rmat}) satisfies the Yang-Baxter equation
\[
\label{yb}
\mathcal{R}_{12} (\lambda_1 - \lambda_2) \mathcal{R}_{13} (\lambda_1 - \lambda_3) \mathcal{R}_{23} (\lambda_2 - \lambda_3) = \mathcal{R}_{23} (\lambda_2 - \lambda_3) \mathcal{R}_{13} (\lambda_1 - \lambda_3) \mathcal{R}_{12} (\lambda_1 - \lambda_2)
\]
in $\mbox{End}(\mathbb{V}_1 \otimes \mathbb{V}_2 \otimes \mathbb{V}_3)$ ensuring the associativity of (\ref{yb_alg}).

\medskip

\paragraph{Representations.} The representations of $A(\lambda)$, $B(\lambda)$, $C(\lambda)$ and 
$D(\lambda)$ in the tensor product space $\mathbb{V}_1 \otimes \dots \otimes \mathbb{V}_L$ can be 
built as follows. We consider an ordered product of $\mathcal{R}$-matrices in the tensor
product space $\mathbb{V}_a \otimes \mathbb{V}_1 \otimes \dots \otimes \mathbb{V}_L$, namely 
\[
\label{rep}
\widebar{\mathcal{T}}_a (\lambda) = \mathop{\overrightarrow\prod}\limits_{1 \le j \le L } \mathcal{R}_{a j} (\lambda - \mu_j) \; ,
\]
with parameters $\lambda , \mu_j \in \mathbb{C}$. Due to (\ref{yb}) one can show that $\widebar{\mathcal{T}}_a$ satisfy
(\ref{yb_alg}) and thus $\mathcal{T} = \widebar{\mathcal{T}}_a$ yields a representation of
$A(\lambda)$, $B(\lambda)$, $C(\lambda)$ and $D(\lambda)$.

\medskip

\paragraph{Highest weight vector.} The vector $\ket{0}$ in (\ref{psi}) is the $\alg{sl}_2$ highest 
weight vector and it explicitly reads
\[
\ket{0} = \bigotimes_{j=1}^{L} \left( \begin{matrix} 1 \cr 0 \end{matrix} \right)
\]
in the space $\mathbb{V}_1 \otimes \dots \otimes \mathbb{V}_L$. Taking into consideration
the representation (\ref{rep}) with $\mathcal{R}$-matrix given by (\ref{rmat}), we readily obtain the properties
\begin{align}
\label{action}
A(\lambda) \ket{0} &= \prod_{j=1}^{L} a(\lambda - \mu_j) \ket{0}&  \bra{0} A(\lambda) &= \prod_{j=1}^{L} a(\lambda - \mu_j) \bra{0} \nonumber \\
D(\lambda) \ket{0} &= \prod_{j=1}^{L} b(\lambda - \mu_j) \ket{0}&  \bra{0} D(\lambda) &= \prod_{j=1}^{L} b(\lambda - \mu_j) \bra{0} \nonumber \\
C(\lambda) \ket{0} &= 0 &  \bra{0} B(\lambda) &= 0 \; ,
\end{align}
where $\bra{0}$ corresponds to the transposition of $\ket{0}$.

\medskip

\paragraph{Twisted transfer matrix.} The Yang-Baxter algebra (\ref{yb_alg}) enables us to show
that the matrix $T(\lambda) = \phi_1 A(\lambda) + \phi_2 D(\lambda)$ forms an one-parameter
family of mutually commuting matrices, i.e. $\left[ T(\lambda_1) , T(\lambda_2) \right] = 0$. 
The matrix $T(\lambda)$ is called twisted transfer matrix and for vertex models it represents
the configuration of rows spanning a two-dimensional lattice with periodic boundary conditions.
In their turn, the parameters $\phi_1 , \phi_2 \in \mathbb{C}$ govern the deviations
from strict toroidal boundary conditions and, in this sense, they introduce the notion of
twisted boundary conditions \cite{deVega_1984}. The vector $\ket{\psi}$ as defined by (\ref{psi})
is an eigenvector of $T(\lambda)$ with eigenvalue
\[
\label{eigen}
\Lambda (\lambda) = \phi_1 \prod_{j=1}^{L} a(\lambda - \mu_j) \prod_{i=1}^{n} \frac{a(\lambda_i^B - \lambda)}{b(\lambda_i^B - \lambda)} + \phi_2 \prod_{j=1}^{L} b(\lambda - \mu_j) \prod_{i=1}^{n} \frac{a(\lambda - \lambda_i^B)}{b(\lambda - \lambda_i^B)}
\]
for particular choices of the parameters $\lambda_i^B$. 

\medskip

\paragraph{Dual Bethe vector.} The dual vector 
\[
\label{dpsi}
\bra{\psi} = \bra{0} C(\lambda_1^B) \dots C(\lambda_n^B)
\]
is an eigenvector of $T$, i.e. $\bra{\psi} T(\lambda) = \Lambda(\lambda) \bra{\psi}$
with the same eigenvalue $\Lambda$ as given by (\ref{eigen}). Nevertheless, here we shall 
consider the dual vector
\[
\label{tpsi}
\bra{\widetilde{\psi}} = \bra{0} C(\lambda_1^C) \dots C(\lambda_n^C)
\] 
in order to keep our results as general as possible.

\medskip

\paragraph{Off-shell scalar product.} The scalar product of Bethe vectors $S_n$ is then defined as
\[
\label{scp}
S_n (\lambda_1^C , \dots , \lambda_n^C | \lambda_1^B , \dots , \lambda_n^B  ) = \bra{0} \prod_{i=1}^{n} C(\lambda_i^C) \; \prod_{i=1}^{n} B(\lambda_i^B) \ket{0} \; .
\]
We shall refer to (\ref{scp}) as off-shell scalar product when the variables $\lambda_i^B$ and $\lambda_i^C$
are free to assume any value on the complex plane.

\medskip

\paragraph{On-shell scalar product.} The vector $\ket{\psi}$ defined in (\ref{psi}) is an eigenvector of
the twisted transfer matrix $T(\lambda)$ only when the variables $\lambda_i^B$ satisfy certain constraints.
Thus the parameters $\lambda_i^B$ and $\lambda_i^C$ need to be fine tuned in order to having
(\ref{scp}) describing the norm of a Bethe vector. The constraints on the variables $\lambda_i^B$ ensuring that
(\ref{psi}) is a transfer matrix eigenvector are the so called Bethe ansatz equations. 
Similar constraints would be required for the dual eigenvector (\ref{tpsi}) but here we shall consider
a less restrictive condition by keeping the variables $\lambda_i^C$ arbitrary. In this way the scalar
product (\ref{scp}) under the condition
\[
\label{BA}
\prod_{j=1}^{L} \frac{a(\lambda_i^B - \mu_j)}{b(\lambda_i^B - \mu_j)} = (-1)^{n-1} \frac{\phi_2}{\phi_1} \prod_{\stackrel{k=1}{k \neq i}}^{n}  \frac{a(\lambda_i^B - \lambda_k^B)}{a(\lambda_k^B - \lambda_i^B)}
\]
will be referred to as on-shell scalar product.

%%%%%%%%%%%%%%%%%%%%%%%%%%%%%%%%%%%%%%%%%%%%%%%%%%%%%%%%%%%%%%%%%%%%%%%%%%%%%%%%
\section{Functional equations}
\label{sec:FN}

This section is devoted to the derivation of functional equations describing the 
scalar product $S_n$ defined in (\ref{scp}). The main ingredient of our derivation
is the commutation rules encoded in the relation (\ref{yb_alg}) commonly referred to 
as Yang-Baxter algebra. We shall obtain two different functional equations and the
determination of $S_n$ will rely on the resolution of this system of equations. 
This method consists of an extension of the one originally proposed in 
\cite{Galleas10} for the partition function of the six-vertex model with domain
wall boundaries and subsequently generalised in \cite{Galleas_2011, Galleas_2012} 
for SOS models. Similarly to those cases, the derivation of functional equations
for scalar products will explore a consistency relation between the $\alg{sl}_2$ algebra
highest weight representation theory and the Yang-Baxter algebra.

\subsection{Equation type A}
\label{sec:eqtA}

We consider the quantity 
\[
\label{baitA}
\bra{0} \prod_{i=1}^{n} C(\lambda_j^{C}) A(\lambda_0) \prod_{i=1}^{n} B(\lambda_j^{B}) \ket{0}
\]
computed in two different ways. In the first way we consider the commutation rules 
\<
\label{commutAB}
B(\lambda_1) B(\lambda_2) &=& B(\lambda_2) B(\lambda_1) \nonumber \\
A(\lambda_1) B(\lambda_2) &=& \frac{a(\lambda_2 - \lambda_1)}{b(\lambda_2 - \lambda_1)} B(\lambda_2) A(\lambda_1) - \frac{c(\lambda_2 - \lambda_1)}{b(\lambda_2 - \lambda_1)} B(\lambda_1) A(\lambda_2)
\>
contained in the relation (\ref{yb_alg}) to move the operator $A(\lambda_0)$ in
(\ref{baitA}) to the right through the string of operators $B(\lambda_j^B)$. Due to (\ref{commutAB}),
at the last step we shall need to compute the action of a given operator $A(\lambda)$
on the state $\ket{0}$. For that we use the relation (\ref{action}) arising from the
$\alg{sl}_2$ highest weight representation theory. By doing so we find that the quantity 
(\ref{baitA}) consists of a linear combination of terms $S_n$ with $B$-type arguments in
the set $\{ \lambda_0 , \lambda_1^{B} , \dots , \lambda_n^{B} \}$ with only $n$ elements being
taken at a time.

The second way of evaluating (\ref{baitA}) is by moving the operator $A(\lambda_0)$ to the left
through all the operators $C(\lambda_j^{C})$. This can be implemented with the help of the relations
\<
\label{commutAC}
C(\lambda_1) C(\lambda_2) &=& C(\lambda_2) C(\lambda_1) \nonumber \\
C(\lambda_1) A(\lambda_2) &=& \frac{a(\lambda_1 - \lambda_2)}{b(\lambda_1 - \lambda_2)} A(\lambda_2) C(\lambda_1) - \frac{c(\lambda_1 - \lambda_2)}{b(\lambda_1 - \lambda_2)} A(\lambda_1) C(\lambda_2) \; ,
\>
which are also among the ones encoded in (\ref{yb_alg}). Then at the last stage we will need the
quantity $\bra{0} A(\lambda)$ which is given in (\ref{action}). Thus this alternative route
of computing (\ref{baitA}) also yields a linear combination of terms $S_n$ but this time with
$C$-type arguments in the set $\{ \lambda_0 , \lambda_1^{C} , \dots , \lambda_n^{C} \}$ 
where only $n$ variables are taken at a time.

In this way the consistency between these two routes of computing (\ref{baitA})
implies the functional equation
\<
\label{typeA}
&& M_0 \; S_n (\lambda_1^{C} , \dots , \lambda_n^{C} | \lambda_1^{B} , \dots , \lambda_n^{B} ) + \sum_{i=1}^{n} N_i^{(B)} S_n (\lambda_1^{C} , \dots , \lambda_n^{C} | \lambda_0 , \lambda_{1}^{B} ,  \dots , \lambda_{i-1}^{B}, \lambda_{i+1}^{B}, \dots , \lambda_n^{B} ) \nonumber \\
&& + \sum_{i=1}^{n} N_i^{(C)} S_n (\lambda_0 , \lambda_{1}^{C} , \dots , \lambda_{i-1}^{C}, \lambda_{i+1}^{C}, \dots , \lambda_n^{C} | \lambda_1^{B} , \dots , \lambda_n^{B} ) = 0 \; , \nonumber \\
\>
with coefficients
\<
\label{coeffA}
M_0 &=& \prod_{j=1}^{L} a(\lambda_0 - \mu_j) \left[  \prod_{i=1}^{n} \frac{a(\lambda_i^{C} - \lambda_0)}{b(\lambda_i^{C} - \lambda_0)} - \prod_{i=1}^{n} \frac{a(\lambda_i^{B} - \lambda_0)}{b(\lambda_i^{B} - \lambda_0)} \right] \nonumber \\
N_i^{(B, C)} &=& \alpha_{B, C} \frac{c(\lambda_i^{B,C} - \lambda_0)}{b(\lambda_i^{B,C} - \lambda_0)} \prod_{j=1}^{L} a(\lambda_i^{B, C} - \mu_j) \prod_{j \neq i}^{n} \frac{a(\lambda_j^{B, C} - \lambda_i^{B, C})}{b(\lambda_j^{B, C} - \lambda_i^{B, C})} \; ,
\>
where $\alpha_B =1$ and $\alpha_C = - 1$.

\medskip

\subsection{Equation type D}
\label{sec:eqtD}

The same mechanism employed in \Secref{sec:eqtA} can also be considered
starting with the quantity
\[
\label{baitD}
\bra{0} \prod_{i=1}^{n} C(\lambda_j^{C}) D(\lambda_0) \prod_{i=1}^{n} B(\lambda_j^{B}) \ket{0}
\]
instead of (\ref{baitA}). In that case, moving the operator $D(\lambda_0)$ to the right through the
string of operators $B(\lambda_j^{B})$ will require the use of the Yang-Baxter algebra relation
\<
D(\lambda_1) B(\lambda_2) &=& \frac{a(\lambda_1 - \lambda_2)}{b(\lambda_1 - \lambda_2)} B(\lambda_2) D(\lambda_1) - \frac{c(\lambda_1 - \lambda_2)}{b(\lambda_1 - \lambda_2)} B(\lambda_1) D(\lambda_2) \; .
\>
On the other hand, the commutation rule
\<
C(\lambda_1) D(\lambda_2) &=& \frac{a(\lambda_2 - \lambda_1)}{b(\lambda_2 - \lambda_1)} D(\lambda_2) C(\lambda_1) - \frac{c(\lambda_2 - \lambda_1)}{b(\lambda_2 - \lambda_1)} D(\lambda_1) C(\lambda_2)
\>
will be required in order to move the operator $D(\lambda_0)$ to the left through all the operators $C(\lambda_j^{C})$.
Besides that we shall also consider the commutation rules $\left[ B(\lambda_1) , B(\lambda_2) \right] = \left[ C(\lambda_1) , C(\lambda_2) \right] = 0$
and the properties (\ref{action}). 

Thus the consistency condition between these two different ways of computing
(\ref{baitD}) leave us with the following functional equation,
\<
\label{typeD}
&& \widetilde{M}_0 \; S_n (\lambda_1^{C} , \dots , \lambda_n^{C} | \lambda_1^{B} , \dots , \lambda_n^{B} ) + \sum_{i=1}^{n} \widetilde{N}_i^{(B)} S_n (\lambda_1^{C} , \dots , \lambda_n^{C} | \lambda_0 , \lambda_{1}^{B} ,  \dots , \lambda_{i-1}^{B}, \lambda_{i+1}^{B}, \dots , \lambda_n^{B} ) \nonumber \\
&& + \sum_{i=1}^{n} \widetilde{N}_i^{(C)} S_n (\lambda_0 , \lambda_{1}^{C} , \dots , \lambda_{i-1}^{C}, \lambda_{i+1}^{C}, \dots , \lambda_n^{C} | \lambda_1^{B} , \dots , \lambda_n^{B} ) = 0 \; , \nonumber \\
\>
whose coefficients are explicitly given by
\<
\label{coeffD}
\widetilde{M}_0 &=& \prod_{j=1}^{L} b(\lambda_0 - \mu_j) \left[  \prod_{i=1}^{n} \frac{a(\lambda_0 - \lambda_i^{C})}{b(\lambda_0 - \lambda_i^{C})} - \prod_{i=1}^{n} \frac{a(\lambda_0 - \lambda_i^{B})}{b(\lambda_0 - \lambda_i^{B})} \right] \nonumber \\
\widetilde{N}_i^{(B, C)} &=& \alpha_{B, C} \frac{c(\lambda_0 - \lambda_i^{B,C})}{b(\lambda_0 - \lambda_i^{B,C})} \prod_{j=1}^{L} b(\lambda_i^{B, C} - \mu_j) \prod_{j \neq i}^{n} \frac{a(\lambda_i^{B, C} - \lambda_j^{B, C})}{b(\lambda_i^{B, C} - \lambda_j^{B, C})} \; .
\>

\medskip

In summary, we have demonstrated in this section how the Yang-Baxter algebra can be explored in order to
derive functional equations for the scalar product of Bethe vectors. We have obtained two distinct
equations which we shall refer to as equation of type A (\ref{typeA}) and equation of type D (\ref{typeD}).
The solution of these equations will be discussed in the next section.

%%%%%%%%%%%%%%%%%%%%%%%%%%%%%%%%%%%%%%%%%%%%%%%%%%%%%%%%%%%%%%%%%%%%%%%%%%%%%%%%
\section{Scalar product}
\label{sec:scalar}

Solving the system of functional equations formed by (\ref{typeA}) and (\ref{typeD})
is the main goal of this section and some remarks are required in order to proceed.
For instance, the method employed here for the derivation of (\ref{typeA}) and (\ref{typeD})
can be seen as an extension of the method considered in \cite{Galleas_2012}, and
the resulting functional relations indeed share some similarities with the one obtained
for the partition function of the elliptic SOS model with domain wall boundaries.
However, there are still some important structural differences that introduce some extra steps
in solving (\ref{typeA}) and (\ref{typeD}). 

Firstly, let us consider the similarities. The functional equations obtained here and the one
of \cite{Galleas_2012} are all relations for a multivariate function, i.e. $F(z_1 , \dots , z_n)$,
composed of a linear combination of terms involving the function $F$ with a given variable
$z_i$ in its argument being replaced by a variable $z_0$.
Moreover, the Eqs. (\ref{typeA}) and (\ref{typeD}) are also homogeneous in the sense that
$\alpha S _n$ solves our system of equations if $S_n$ is a solution and $\alpha$ is a constant. 
In fact, $\alpha$ only needs to be independent of the variables $\lambda_i^B$ and 
$\lambda_i^C$. This property tells us in advance that Eqs. (\ref{typeA}) and (\ref{typeD}) will
be able to determine $S_n$ only up to an overall multiplicative factor independent of
$\lambda_i^{B}$ and $\lambda_i^{C}$ at most. 
Thus it will be necessary to evaluate the function $S_n$ for a particular value of its variables
in order to have our scalar product completely fixed. 
Still considering the similarities, both Eqs. (\ref{typeA}) and (\ref{typeD}) are linear which raise
the issue of uniqueness of the solution. Here we will be interested in a multivariate polynomial solution
and this property ensures uniqueness as demonstrated in \cite{Galleas_2011} under very general conditions.

Now let us consider the differences between (\ref{typeA}, \ref{typeD}) and the functional equation
obtained in \cite{Galleas_2012}. The most obvious difference is that here we have obtained two equations which
might suggest that one of them is redundant. However, the direct inspection of our equations
for small values of $n$ and $L$ reveals that the polynomial solution which we shall be interested
can not be completely fixed by only one of the equations. The situation is different when we consider
both equations simultaneously, and their direct inspection shows that the system (\ref{typeA}, \ref{typeD})
indeed determines the polynomial solution up to an overall multiplicative factor.    
Furthermore, in the case considered in \cite{Galleas_2012} we have an equation running only over one
set of variables, i.e. $\{ \lambda_0 , \lambda_1 , \dots , \lambda_n \}$.
Here both of our equations run over the two sets of variables $\{ \lambda_0 , \lambda_1^{B} , \dots , \lambda_n^{B} \}$ 
and $\{ \lambda_0 , \lambda_1^{C} , \dots , \lambda_n^{C} \}$. Taking into account the above discussion, the following Lemmas 
will pave the way for solving (\ref{typeA}, \ref{typeD}).

\begin{lemma}[Polynomial structure] \label{polst}
In terms of variables $x_i^{B,C} = e^{2 \lambda_i^{B,C}}$, the function $S_n$ is of the form
$S_n  = \prod_{i=1}^{n} (x_i^B x_i^C )^{-\frac{L-1}{2}} \bar{S}_n (x_1^C , \dots , x_n^C | x_1^B , \dots , x_n^B )$
where $\bar{S}_n$ is a polynomial of order $L-1$ in each one of its variables separately. 
\end{lemma}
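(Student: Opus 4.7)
The plan is to establish the claimed Laurent-polynomial structure directly for each operator $B(\lambda_i^B)$ and $C(\lambda_i^C)$, and then transfer the conclusion to the scalar product by observing that in \eqref{scp} the variable $\lambda_i^B$ enters only through $B(\lambda_i^B)$ and the variable $\lambda_i^C$ only through $C(\lambda_i^C)$.

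First I would rewrite the $\mathcal{R}$-matrix weights in the variable $x = e^{2\lambda}$. A direct computation yields $a(\lambda - \mu_j) = \frac{1}{2} e^{-\mu_j + \gamma} x^{-1/2}\,(x - e^{2\mu_j - 2\gamma})$ and $b(\lambda - \mu_j) = \frac{1}{2} e^{-\mu_j} x^{-1/2}\,(x - e^{2\mu_j})$, whereas $c = \sinh \gamma$ is independent of $\lambda$. Hence every $\lambda$-dependent vertex contributes the factor $x^{-1/2}$ multiplied by a polynomial of degree one in $x$, and every $c$-vertex contributes a constant.

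Next I would expand $B(\lambda)$ using the monodromy-matrix representation \eqref{rep}: it is a sum over lattice configurations that transport the auxiliary spin from $\ket{\uparrow}$ to $\ket{\downarrow}$ along the chain of $L$ quantum sites. Every such configuration carries an odd number $k \ge 1$ of $c$-weights and $L-k$ weights of type $a$ or $b$, so its contribution equals $x^{-(L-k)/2}$ times a polynomial in $x$ of degree at most $L-k$. Multiplying the whole sum by $x^{(L-1)/2}$ produces $x^{(k-1)/2}$ times a polynomial of degree at most $L-k$, whose total degree is $L - (k+1)/2 \le L-1$. Since $k$ is odd and at least one, the exponent $(k-1)/2$ is a nonnegative integer, so the result is a genuine polynomial in $x$ of degree at most $L-1$. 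The identical argument, with the roles of the auxiliary endpoints swapped, applies to $C(\lambda) = \mathcal{T}_{21}(\lambda)$.

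Finally, since \eqref{scp} is multilinear in the operators and no factor other than $B(\lambda_i^B)$ depends on $\lambda_i^B$, the conclusion of the previous step carries over to each argument of $S_n$ separately: as a function of $x_i^B$, $S_n$ equals $(x_i^B)^{-(L-1)/2}$ times a polynomial of degree at most $L-1$, and likewise for every $x_i^C$. Pulling out the global prefactor $\prod_{i=1}^{n} (x_i^B x_i^C)^{-(L-1)/2}$ isolates $\bar{S}_n$ with the claimed polynomial degree in each variable separately. The only subtlety worth flagging is the exponent bookkeeping: it is essential that the number of $c$-weights in any nonvanishing configuration contributing to $B$ or $C$ is both odd and at least one, as otherwise extracting $x^{-(L-1)/2}$ would leave half-integer or negative powers rather than a true polynomial.
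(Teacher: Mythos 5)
Your proof is correct and follows essentially the same route as the paper's Appendix~\ref{sec:POL}: both rest on writing each $a$- and $b$-weight as $x^{-1/2}$ times a degree-one polynomial in $x=e^{2\lambda}$ (with $c$ constant), concluding $B(\lambda)=x^{-\frac{L-1}{2}}P_B(x)$ and $C(\lambda)=x^{-\frac{L-1}{2}}P_C(x)$ with $P_{B,C}$ of degree $L-1$, and then transferring this to $S_n$ through the definition \eqref{scp} since each $\lambda_i^B$ (resp.\ $\lambda_i^C$) enters only via $B(\lambda_i^B)$ (resp.\ $C(\lambda_i^C)$). The only difference is bookkeeping: the paper iterates the recursion $\mathcal{T}^{(L+1)}=\mathcal{T}^{(L)}\mathcal{R}_{a L+1}$, whereas you expand over auxiliary-spin configurations and use the odd parity of the number of $c$-vertices, which yields the same degree bound.
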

\begin{proof}
See \Appref{sec:POL}. 
\end{proof}

\begin{lemma}[Special zeroes] \label{zero}
The function $S_n (\lambda_1^{C} , \dots , \lambda_n^{C} | \lambda_1^{B} , \dots , \lambda_n^{B})$
vanishes for the specialisation of variables $\lambda_1^{B} = \mu_1$ and $\lambda_2^{B} = \mu_1 - \gamma$.
The same property also holds for the specialisation $\lambda_1^{C} = \mu_1$ and $\lambda_2^{C} = \mu_1 - \gamma$.
\end{lemma}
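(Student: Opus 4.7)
The plan is to reduce each assertion of the Lemma to a local annihilation identity. Since \eqref{commutAB} gives $\comm{B(\lambda_1)}{B(\lambda_2)}=0$, one may reorder the string $\prod_{i=1}^{n}B(\lambda_i^{B})$ under the substitution $\lambda_1^{B}=\mu_1$, $\lambda_2^{B}=\mu_1-\gamma$ so that $B(\mu_1)B(\mu_1-\gamma)$ stands immediately to the left of $\ket{0}$; it will therefore suffice to prove $B(\mu_1)B(\mu_1-\gamma)\ket{0}=0$. The $C$-case is handled dually: using $\comm{C(\lambda_1)}{C(\lambda_2)}=0$ from \eqref{commutAC} to place $C(\mu_1)C(\mu_1-\gamma)$ adjacent to $\bra{0}$, the claim reduces to $\bra{0}C(\mu_1)C(\mu_1-\gamma)=0$.

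Both identities are driven by the special structure of \eqref{rmat} at the two shifts $0$ and $-\gamma$. Direct inspection gives $\mathcal{R}(0)=\sinh(\gamma)\,P$, proportional to the permutation, and $\mathcal{R}(-\gamma)=-2\sinh(\gamma)\,P^{-}$, proportional to the antisymmetric projector $P^{-}=\tfrac{1}{2}(I-P)$ onto the singlet subspace. Peeling off the first factor in the representation \eqref{rep} one thus has
\[
\widebar{\mathcal{T}}_a(\mu_1)=\sinh(\gamma)\,P_{a1}\,\widebar{\mathcal{T}}_a'(\mu_1),\qquad \widebar{\mathcal{T}}_a(\mu_1-\gamma)=-2\sinh(\gamma)\,P^{-}_{a1}\,\widebar{\mathcal{T}}_a'(\mu_1-\gamma),
\]
with $\widebar{\mathcal{T}}_a'(\lambda)=\prod_{j=2}^{L}\mathcal{R}_{aj}(\lambda-\mu_j)$ acting trivially on site $1$.

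The first step is to compute $B(\mu_1-\gamma)\ket{0}=\bra{\uparrow}_{a}\widebar{\mathcal{T}}_a(\mu_1-\gamma)\ket{\downarrow}_{a}\ket{0}$. Because $\widebar{\mathcal{T}}_a'(\mu_1-\gamma)$ leaves site $1$ untouched, the only component that survives $P^{-}_{a1}$ is the one in which site $1$ retains $\ket{\uparrow}$ and the auxiliary is converted to $\ket{\downarrow}$, i.e.\ the contribution sourced by the $D'$-entry of $\widebar{\mathcal{T}}_a'$. A short bookkeeping calculation yields $B(\mu_1-\gamma)\ket{0}=\sinh(\gamma)\prod_{j=2}^{L}b(\mu_1-\gamma-\mu_j)\,\ket{\downarrow\uparrow\cdots\uparrow}$. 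Acting now with $B(\mu_1)$ on this single-flip state and using the permutation decomposition, $P_{a1}$ transfers the down spin from site $1$ into the auxiliary, so that every surviving component carries $\ket{\downarrow}_a$; these are all immediately annihilated by the extraction $\bra{\uparrow}_{a}$. Hence $B(\mu_1)B(\mu_1-\gamma)\ket{0}=0$.

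The $C$-sector is handled by the mirror of this mechanism. The same permutation argument applied to $\bra{0}C(\mu_1)=\bra{0}\bra{\downarrow}_{a}\widebar{\mathcal{T}}_a(\mu_1)\ket{\uparrow}_{a}$ yields $\bra{0}C(\mu_1)=\sinh(\gamma)\prod_{j=2}^{L}a(\mu_1-\mu_j)\,\bra{\downarrow\uparrow\cdots\uparrow}$. Contracting this single-flip bra with $C(\mu_1-\gamma)$ on the right produces a factor $\bra{\downarrow}_{1}\bra{\downarrow}_{a}P^{-}_{a1}$ acting on the auxiliary and the first quantum space, which vanishes identically because $P^{-}_{a1}$ annihilates the symmetric state $\ket{\downarrow\downarrow}_{a1}$. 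The substantive content of the argument is thus carried entirely by the projector identities for $\mathcal{R}(0)$ and $\mathcal{R}(-\gamma)$; the main obstacle in writing the proof out is keeping the auxiliary-space indices and the spin labels at site $1$ consistent across the four separate manipulations.
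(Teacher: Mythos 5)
Your proof is correct, but it follows a genuinely different route from the paper. The paper derives Lemma \ref{zero} entirely \emph{within} the functional-equation framework: it specialises the variables in Eqs.~(\ref{typeA}) and (\ref{typeD}) step by step (first for $n=2,3$, then in general), exploits non-trivial zeroes of the resulting rational coefficients, and concludes the vanishing of $S_n$ without ever returning to the operator realisation — indeed the text stresses that this lemma is ``a direct consequence of the functional relations''. You instead go back to the representation (\ref{rep}) and use the degenerations $\mathcal{R}(0)=\sinh(\gamma)\,P$ and $\mathcal{R}(-\gamma)=-2\sinh(\gamma)\,P^{-}$ to prove the local identities $B(\mu_1)B(\mu_1-\gamma)\ket{0}=0$ and $\bra{0}C(\mu_1)C(\mu_1-\gamma)=0$, which, together with $\comm{B}{B}=\comm{C}{C}=0$, immediately give the lemma; I checked the key intermediate statements ($B(\mu_1-\gamma)\ket{0}\propto\ket{\downarrow}_1\otimes\ket{\uparrow\cdots\uparrow}$ with coefficient $\sinh(\gamma)\prod_{j=2}^{L}b(\mu_1-\gamma-\mu_j)$, the annihilation of any site-$1$-down state by $\bra{\uparrow}_a P_{a1}$, and the dual statements for $C$) and they are all accurate, including the claim $\bra{\downarrow}_1\bra{\downarrow}_a P^{-}_{a1}=0$. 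Your argument is shorter and makes the origin of the zeroes transparent (it is the standard singlet-projector/fusion mechanism familiar from recursions for domain-wall partition functions), and it only needs the trivial reordering of the $B$'s among themselves and of the $C$'s among themselves. What the paper's longer derivation buys is self-containedness of the functional method: the zeroes are shown to be forced by the equations alone, in line with the philosophy that (\ref{typeA}), (\ref{typeD}) plus polynomiality and asymptotics characterise $S_n$, which is the feature one would want to carry over to settings where only the functional equations are available. Since Lemmas \ref{polst} and \ref{asymp} are in any case proved from the operator construction, your proof introduces no circularity and is a legitimate alternative.
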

\begin{proof}
See \Appref{sec:ZERO}. 
\end{proof}

\begin{lemma}[Doubly symmetric function] \label{symm}
The scalar product $S_n ( \lambda_1^{C} , \dots , \lambda_n^{C} | \lambda_1^{B} , \dots , \lambda_n^{B})$
is a symmetric function in each one of the set of variables $\{ \lambda_i^{B}  \}$ and $\{ \lambda_i^{C} \}$
independently. More precisely, 
\<
&& S_n ( \lambda_1^{C} , \dots , \lambda_i^{C} , \dots , \lambda_j^{C} , \dots , \lambda_n^{C} | \lambda_1^{B} , \dots , \lambda_n^{B}) \nonumber \\
&&  \qquad \qquad \qquad \qquad \qquad \qquad \qquad \qquad = S_n ( \lambda_1^{C} , \dots , \lambda_j^{C} , \dots , \lambda_i^{C} , \dots , \lambda_n^{C} | \lambda_1^{B} , \dots , \lambda_n^{B}) \nonumber
\>
and 
\<
&& S_n ( \lambda_1^{C} , \dots , \lambda_n^{C} | \lambda_1^{B} , \dots , \lambda_i^{B} , \dots , \lambda_j^{B} , \dots , \lambda_n^{B}) \nonumber \\
&&  \qquad \qquad \qquad \qquad \qquad \qquad \qquad \qquad = S_n ( \lambda_1^{C} , \dots , \lambda_n^{C} | \lambda_1^{B} , \dots , \lambda_j^{B} , \dots , \lambda_i^{B} , \dots , \lambda_n^{B} ) \; . \nonumber
\>
\end{lemma}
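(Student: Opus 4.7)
The plan is to reduce the symmetry statement directly to the commutation relations for $B$ and $C$ that are already contained in the Yang-Baxter algebra \eqref{yb_alg}. Recall that the first lines of \eqref{commutAB} and \eqref{commutAC} give
\[
B(\lambda_1) B(\lambda_2) = B(\lambda_2) B(\lambda_1), \qquad C(\lambda_1) C(\lambda_2) = C(\lambda_2) C(\lambda_1),
\]
so every pair of $B$-operators mutually commutes, and the same for the $C$-operators. Starting from the definition
\[
S_n = \bra{0} \prod_{i=1}^{n} C(\lambda_i^{C}) \, \prod_{i=1}^{n} B(\lambda_i^{B}) \ket{0},
\]
any transposition of two neighbouring factors inside the $B$-string can be implemented using the first identity above without affecting the $C$-string or the reference states, and analogously for the $C$-string.

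First I would fix an arbitrary transposition in the $B$-arguments: swap $\lambda_i^{B}$ and $\lambda_j^{B}$. Any such transposition can be written as a product of adjacent transpositions, and each adjacent transposition amounts to applying $B(\lambda_k^{B}) B(\lambda_{k+1}^{B}) = B(\lambda_{k+1}^{B}) B(\lambda_k^{B})$ once inside the string, leaving the rest of the expression untouched. Iterating, the full $B$-string is invariant under any permutation of the $\lambda_i^{B}$, and hence $S_n$ is symmetric in $\{\lambda_i^{B}\}$. The very same argument, applied now to the $C$-string on the left with the second commutation identity, delivers symmetry in $\{\lambda_i^{C}\}$.

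There is no genuine obstacle: the statement follows immediately from the two abelianness properties $[B(\lambda_1),B(\lambda_2)]=0$ and $[C(\lambda_1),C(\lambda_2)]=0$, which are direct consequences of the Yang-Baxter algebra \eqref{yb_alg} with the six-vertex $\mathcal{R}$-matrix \eqref{rmat}. The only point worth emphasising is that the two symmetries act on independent sets of variables and therefore combine to give the joint ``doubly symmetric'' property without any interference between the $B$- and $C$-strings, since each string sits on its own side of the reference state sandwich.
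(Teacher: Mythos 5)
Your proof is correct: the commutativity $[B(\lambda_1),B(\lambda_2)]=[C(\lambda_1),C(\lambda_2)]=0$ contained in \eqref{commutAB} and \eqref{commutAC}, applied to the definition \eqref{scp} via adjacent transpositions, does establish the double symmetry, and the paper itself acknowledges this as the ``expected'' reason for the property. However, the paper's own proof in \Appref{sec:sym} deliberately takes a different route: since the whole strategy of the paper is to \emph{characterise} $S_n$ as a solution of the functional system \eqref{typeA}, \eqref{typeD}, the author wants the symmetry to be an inherent property of solutions of those equations rather than of the operator construction. Concretely, the paper integrates Eq.~\eqref{typeA} in $\lambda_0$ around a small contour enclosing only $\lambda_i^{B}$ (respectively $\lambda_i^{C}$), observes that only the coefficients $M_0$ and $N_i^{(B)}$ (respectively $N_i^{(C)}$) have poles there, and uses the residue identity \eqref{pp} to conclude that $S_n$ is unchanged when $\lambda_i^{B}$ (or $\lambda_i^{C}$) is moved to the first slot, whence full permutation symmetry. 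What the paper's argument buys is logical self-containment of the functional-equation method: any candidate solution of \eqref{typeA} automatically carries the symmetry used in Steps 1--10, independently of the algebraic origin of $S_n$. What your argument buys is brevity and transparency, at the cost of resting on the operator definition rather than on the equations that are actually used to determine $S_n$; as a proof of the Lemma as stated it is complete.
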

\begin{proof}
See \Appref{sec:sym}.
\end{proof}

\begin{lemma}[Asymptotic behaviour] \label{asymp}
In the limit $x_i^{B,C} \rightarrow \infty$, the function $S_n$ behaves as
\begin{equation}
\label{asy}
S_n  \sim  \frac{(q - q^{-1})^{2n}}{2^{2 n L}} q^{n(L-n)} [ n! ]_{q^2}^2 e^{-2 n \sum_{j=1}^{L} \mu_j}  \sum_{1 \leq a_1 < \dots < a_n \leq L} e^{2 \sum_{j=1}^{n} \mu_{a_j}} \prod_{i=1}^{n} (x_i^B x_i^C)^{\frac{L-1}{2}}  \; , \nonumber \\
\end{equation}
where $[ n! ]_{q^2}$ denotes the $q$-factorial function defined as
\<
[ n! ]_{q^2} = 1 (1 + q^2)(1+ q^2 + q^4) \dots (1+ q^2 + \dots + q^{2(n-1)}) \; .
\>
\end{lemma}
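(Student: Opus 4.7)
The plan is to extract the leading large-$\lambda$ behaviour of the operators $B(\lambda)$ and $C(\lambda)$, and then reduce the scalar product $S_n$ to a purely combinatorial calculation on the quantum space. First I would expand each $R$-matrix. After factoring out the overall scale $\tfrac{1}{2}e^{\lambda-\mu_j}$, one finds
\[
\mathcal{R}_{aj}(\lambda-\mu_j)=\tfrac{1}{2}e^{\lambda-\mu_j}\Bigl[q^{(1+\sigma^z_a\sigma^z_j)/2}+(q-q^{-1})e^{-(\lambda-\mu_j)}\tau_{aj}+O(e^{-2(\lambda-\mu_j)})\Bigr],
\]
with $q=e^\gamma$ and $\tau_{aj}=E^a_{12}E^j_{21}+E^a_{21}E^j_{12}$. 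The leading piece is diagonal in the auxiliary space and contributes only to $A(\lambda)$ and $D(\lambda)$, so the first non-trivial contribution to $B(\lambda)=\langle 1|_a\bar{\mathcal T}_a|2\rangle_a$ and $C(\lambda)=\langle 2|_a\bar{\mathcal T}_a|1\rangle_a$ comes from inserting exactly one $\tau_{ak}$ between strings of diagonal factors.

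Carrying out this insertion I expect to arrive at
\[
B(\lambda)\sim\frac{(q-q^{-1})q^{(L-1)/2}}{2^L}\,e^{(L-1)\lambda-\sum_j\mu_j}\,\tilde B,\qquad C(\lambda)\sim\frac{(q-q^{-1})q^{(L-1)/2}}{2^L}\,e^{(L-1)\lambda-\sum_j\mu_j}\,\tilde C,
\]
with
\[
\tilde B=\sum_{k=1}^{L}e^{\mu_k}\,q^{S^z_{<k}}\sigma^-_k q^{-S^z_{>k}},\qquad \tilde C=\sum_{k=1}^{L}e^{\mu_k}\,q^{-S^z_{<k}}\sigma^+_k q^{S^z_{>k}},
\]
where $S^z_{<k}=\tfrac12\sum_{j<k}\sigma^z_j$ and analogously for $S^z_{>k}$. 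These are recognisable as the $\mu_k$-weighted $q$-deformed lowering and raising generators of $U_q(\mathfrak{sl}_2)$ acting on $(\mathbb{C}^2)^{\otimes L}$. Because $[B(\lambda_i),B(\lambda_j)]=[C(\lambda_i),C(\lambda_j)]=0$ and the next-order corrections to each single-operator asymptotic are suppressed by additional powers of $e^{-2\lambda}$, products factorise cleanly, $\prod_i B(\lambda^B_i)\sim\bigl[\prod_i\text{prefactor}_i\bigr]\tilde B^n$, and similarly for $\prod_i C(\lambda^C_i)$. Substituting into \eqref{scp} reduces the asymptotic problem to evaluating $\bra 0\tilde C^n\tilde B^n\ket 0$.

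For this overlap I would proceed by induction on $n$: each fresh application of $\tilde B$ to an $m$-magnon state $\ket{k_1,\dots,k_m}$ produces a sum over positions $l\notin\{k_1,\dots,k_m\}$ at which the next magnon is created, with $q$-weights fixed by how many existing magnons lie on either side of $l$. Summing over the $n!$ orderings in which the sites $a_1<\dots<a_n$ can be successively filled assembles the $q^2$-factorial, giving
\[
\tilde B^n\ket 0=[n!]_{q^2}\sum_{a_1<\dots<a_n}e^{\sum_j\mu_{a_j}}q^{\varphi(a)}\ket{a_1,\dots,a_n},
\]
and an analogous formula for $\bra 0\tilde C^n$ with exponent $\psi(a)$. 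Computing the inner product, verifying that $\varphi(a)+\psi(a)=1-n$ independently of the tuple $a$, and combining with the external prefactors will produce the total $q$-exponent $n(L-1)+n(1-n)=n(L-n)$ and the factor $[n!]_{q^2}^2$ announced in \eqref{asy}.

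The hard part will be the bookkeeping in the last step: although the $q$-weights $\varphi(a)$ and $\psi(a)$ genuinely depend on the magnon positions, they must cancel in the overlap up to the universal constant $1-n$. This can be handled by a direct induction on $n$, but more conceptually by observing that $\tilde B$, $\tilde C$ together with $q^{S^z}$ realise $U_q(\mathfrak{sl}_2)$ on $(\mathbb C^2)^{\otimes L}$ with $\ket 0$ as highest-weight vector, so that $\bra 0\tilde C^n\tilde B^n\ket 0$ is controlled by the standard $q$-deformed norm formula, with the sum over $(a_1,\dots,a_n)$ arising from decomposing the tensor product into irreducible components.
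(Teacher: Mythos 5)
Your proposal follows essentially the same route as the paper: expand the $R$-matrices at large spectral parameter, identify the leading behaviour of $B(\lambda)$ and $C(\lambda)$ as $\mu_k$-weighted sums of $q$-dressed lowering/raising operators (your $\tilde B,\tilde C$ are exactly the paper's $\sum_k e^{\mu_k}P_k^{\mp}$), and reduce the asymptotics of $S_n$ to the overlap $\bra{0}\tilde C^n\tilde B^n\ket{0}$, which the paper evaluates via the exchange relations $P_i^aP_j^b=q^{(a,b)}P_j^bP_i^a$, $(P_i^a)^2=0$, yielding the diagonal pairing $b_j=a_j$ with position-independent weight $q^{n(n-1)}$ — precisely your claim that $\varphi(a)+\psi(a)$ is a universal constant; your exponent bookkeeping $n(L-1)-n(n-1)=n(L-n)$ agrees with the paper's. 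One caveat on your closing remark: $\tilde B$ and $\tilde C$ do \emph{not} literally realise $U_q(\alg{sl}_2)$ on $(\Complex^2)^{\otimes L}$ — they carry opposite coproduct dressings ($K$'s on opposite sides), so the cross terms in $[\tilde C,\tilde B]$ do not cancel, and the inhomogeneous weights $e^{\mu_k}$ would in any case prevent the commutator from being of the form $(\mathcal{K}^2-\mathcal{K}^{-2})/(q-q^{-1})$; so the ``standard $q$-deformed norm formula'' shortcut is not available, and you must rely on your direct induction (equivalently, the paper's $P$-operator algebra), which does go through.
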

\begin{proof}
See \Appref{sec:ASYMP}. 
\end{proof}

\begin{remark}
Due to the Lemma \ref{symm} we can safely employ the notation 
\[
S_n (\lambda_1^C , \dots , \lambda_n^C | \lambda_1^B , \dots , \lambda_n^B) = S_n (X^{1,n} | Y^{1,n}) \nonumber 
\]
where $X^{i,j} = \{ \lambda_k^C \; | \; i \leq k \leq j \}$ and $Y^{i,j} = \{ \lambda_k^B \; | \; i \leq k \leq j \}$. 
\end{remark}

Except for the Lemmas \ref{polst} and \ref{asymp}, the remaining ones are a direct consequence of the
functional relations (\ref{typeA}) and (\ref{typeD}).

%%%%%%%%%%%%%%%%%%%%%%%%%%%%%%%%%%%%%%%%%%%%%%%%%%%%%%%%%%%%%%%%%%%%%%%%%%%%%%%%
\subsection{Off-shell formula}
\label{sec:offshell}

Here we shall consider the resolution of the system of equations (\ref{typeA}, \ref{typeD})
for general values of variables $\lambda_i^B$ and $\lambda_i^C$. This case is refereed to as
off-shell scalar product as remarked in \Secref{sec:bethe}. The methodology we shall employ for 
solving (\ref{typeA}, \ref{typeD}) is similar to the one developed in \cite{Galleas_2012}
and in what follows we describe a sequence of steps leading to the solution.

\paragraph{Step 1.} We firstly consider Eq. (\ref{typeA}) under the specialisation of variables
$\lambda_0 = \mu_1 - \gamma$ and $\lambda_n^B = \mu_1$ such that the coefficient $M_0$ vanishes.
We shall also consider the property $S_n ( X^{1,n} | \mu_1 - \gamma , \dots , \mu_1 ) = 0$
obtained from Lemmas \ref{zero} and \ref{symm} implying the relation
\<
\label{sv}
S_n (\bar{X}^{2,n} | \check{Y}^{2,n} ) &=& \prod_{j=2}^{n} b(\lambda_j^C - \mu_1)  a(\lambda_j^B - \mu_1) \; V( X^{2,n} | Y^{2,n} ) \; ,
\>
due to Lemmas  \ref{polst} and \ref{symm}. In (\ref{sv}) we have also introduced the notation
$\bar{Z}^{i,j} = Z^{i,j} \cup \{ \mu_1 - \gamma  \}$ and $\check{Z}^{i,j} = Z^{i,j} \cup \{ \mu_1 \}$
for $Z^{i,j} \in \{ X^{i,j} , Y^{i,j} \}$. In its turn the function $V$ appearing in (\ref{sv}) is also of the 
form described in Lemma \ref{polst} under the identifications $S_n \mapsto V$, $ n \mapsto n-1$ and $L \mapsto L-1$. 
Thus under this specialisation of variables, Eq. (\ref{typeA}) yields the relation 
\<
\label{sv1}
S_n (X^{1,n}  | \bar{Y}^{1,n-1}  ) = \sum_{i=1}^{n}  m_i^{(A)} \; V (X_i^{1,n} | Y^{1,n-1}) \; , 
\>
where $X_k^{i,j} = X^{i,j} \backslash \{ \lambda_k^C \}$ and
\<
\label{mia}
m_i^{(A)} &=& - \prod_{j=1}^{n-1} a(\lambda_j^{B} - \mu_1) \prod_{\stackrel{j=1}{j \neq i}}^{n} b(\lambda_j^{C} - \mu_1) \left. \frac{N_i^{(C)}}{N_n^{(B)}} \right|_{\stackrel{\lambda_0 = \mu_1 - \gamma}{\lambda_n^{B} = \mu_1}}  \nonumber \\
&=& \frac{c}{a(\lambda_i^C - \mu_1)} \prod_{j=1}^{L} \frac{a(\lambda_i^C - \mu_j)}{a(\mu_1 - \mu_j)} \prod_{j=1}^{n-1} b(\lambda_j^B - \mu_1) \prod_{\stackrel{j=1}{j \neq i}}^{n} b(\lambda_j^C - \mu_1) \frac{a(\lambda_j^C - \lambda_i^C)}{b(\lambda_j^C - \lambda_i^C)} \; . \nonumber \\
\>

\paragraph{Step 2.} Analogously to (\ref{sv}), due to Lemmas \ref{polst}, \ref{zero} and \ref{symm} we can write
\<
\label{rv}
S_n (\check{X}^{2,n} | \bar{Y}^{2,n} ) &=& \prod_{j=2}^{n} a(\lambda_j^C - \mu_1)  b(\lambda_j^B - \mu_1) \; W( X^{2,n} | Y^{2,n} ) \; ,
\>
where the function $W$ is also of the form described in Lemma \ref{polst} under the mappings $n \mapsto n-1$ and $L \mapsto L-1$.
Then by setting $\lambda_0 = \mu_1 - \gamma$ and $\lambda_n^C = \mu_1$ in Eq. (\ref{typeA}), and considering 
the relation (\ref{rv}) in addition to the property $S_n (\mu_1 - \gamma , \dots , \mu_1 | Y^{1,n}) = 0$, we obtain the formula
\<
\label{rv1}
S_n (\bar{X}^{1,n-1} | Y^{1,n}  ) = \sum_{i=1}^{n}  \bar{m}_i^{(A)} \; W (X^{1,n-1} | Y_i^{1,n}) \; , 
\>
where $Y_k^{i,j} = Y^{i,j} \backslash \{ \lambda_k^B \}$ and
\<
\label{bmia}
\bar{m}_i^{(A)} &=& - \prod_{j=1}^{n-1} a(\lambda_j^{C} - \mu_1) \prod_{\stackrel{j=1}{j \neq i}}^{n} b(\lambda_j^{B} - \mu_1) \left. \frac{N_i^{(B)}}{N_n^{(C)}} \right|_{\stackrel{\lambda_0 = \mu_1 - \gamma}{\lambda_n^{C} = \mu_1}}  \nonumber \\
&=& \frac{c}{a(\lambda_i^B - \mu_1)} \prod_{j=1}^{L} \frac{a(\lambda_i^B - \mu_j)}{a(\mu_1 - \mu_j)} \prod_{j=1}^{n-1} b(\lambda_j^C - \mu_1) \prod_{\stackrel{j=1}{j \neq i}}^{n} b(\lambda_j^B - \mu_1) \frac{a(\lambda_j^B - \lambda_i^B)}{b(\lambda_j^B - \lambda_i^B)} \; . \nonumber \\
\>

\paragraph{Step 3.} The coefficient $N_n^{(B)}$ vanishes under the specialisation $\lambda_n^{B} = \mu_1 - \gamma$.
Thus, under this particular specialisation, the Eq. (\ref{typeA}) will contain only terms of the form $S_n ( X^{1,n} | \bar{Y}^{1,n-1})$
allowing us to use the relation (\ref{sv1}) to obtain an equation involving solely the function $V$.
The resulting equation can be further simplified by setting $\lambda_n^{C} = \mu_1$, and in this way we are left with
the relation
\<
\label{typeAV}
J_0 \; V ( X^{1,n-1} | Y^{1,n-1} ) + \sum_{i=1}^{n-1} K_i^{(B)} V ( X^{1,n-1} | Y_i^{0,n-1}) + \sum_{i=1}^{n-1} K_i^{(C)} V ( X_i^{0,n-1} | Y^{1,n-1}) = 0 \; .  \nonumber \\
\>
In their turn the coefficients $J_0$ and $K_i^{(B,C)}$ appearing in (\ref{typeAV}) correspond respectively to the coefficients $M_0$
and $N_i^{(B,C)}$ given in (\ref{coeffA}) under the mappings $L \mapsto L-1$, $n \mapsto n-1$ and 
$\mu_i \mapsto \mu_{i+1}$. Thus the function $V$ obeys essentially the same equation as $S_{n-1}$ for 
Bethe vectors living in $\mathbb{V}_1 \otimes \dots \otimes \mathbb{V}_{L-1}$.

\paragraph{Step 4.} Next we consider Eq. (\ref{typeA}) with $\lambda_n^C = \mu_1 - \gamma$. In that case the 
coefficient $N_n^{(C)}$ vanishes and we are left only with terms of the form $S_n (\bar{X}^{1,n-1} | Y^{1,n} )$. 
Then we use the relation (\ref{rv1}) to obtain an equation only in terms of the function $W$. After setting 
$\lambda_n^B = \mu_1$ and eliminating an overall factor, the equation obtained in this way reads
\<
\label{typeAW}
J_0 \; W ( X^{1,n-1} | Y^{1,n-1} ) + \sum_{i=1}^{n-1} K_i^{(B)} W ( X^{1,n-1} | Y_i^{0,n-1}) + \sum_{i=1}^{n-1} K_i^{(C)} W ( X_i^{0,n-1} | Y^{1,n-1}) = 0 \; . \nonumber \\
\>  
Thus the functions $V$ and $W$ obey the same equation and the uniqueness of the solution of (\ref{typeA}) would
imply that $V$ and $W$ can differ only by an overall constant factor.

\paragraph{Step 5.} We consider the Eq. (\ref{typeD}) with  $\lambda_0 = \mu_1$ and $\lambda_n^B = \mu_1 - \gamma$.
In that case the coefficient $\widetilde{M}_0 = 0$, and we can use formula (\ref{rv}) in addition to the 
Lemma \ref{zero} to obtain the relation
\<
\label{tv1}
S_n ( X^{1,n} | \check{Y}^{1,n-1} ) = \sum_{i=1}^{n}  \bar{m}_i^{(D)} \; W (X_i^{1,n} | Y^{1,n-1}) \; , 
\>
where
\<
\label{bmid}
\bar{m}_i^{(D)} &=& - \prod_{j=1}^{n-1} b(\lambda_j^B - \mu_1) \prod_{\stackrel{j=1}{j \neq i}}^{n} a(\lambda_j^C - \mu_1) \left. \frac{\widetilde{N}_i^{(C)}}{\widetilde{N}_n^{(B)}} \right|_{\stackrel{\lambda_0 = \mu_1}{\lambda_n^B = \mu_1 - \gamma}} \nonumber \\
&=& \frac{c}{b(\mu_1 - \lambda_i^C)} \prod_{j=1}^{L} \frac{b(\mu_j - \lambda_i^C)}{a(\mu_j - \mu_1)} 
\prod_{j=1}^{n-1} a(\lambda_j^B - \mu_1) \prod_{\stackrel{j=1}{j \neq i}}^{n} a(\lambda_j^C - \mu_1) \frac{a(\lambda_i^C - \lambda_j^C)}{b(\lambda_i^C - \lambda_j^C)} \; . \nonumber \\
\>

\paragraph{Step 6.} Next we set $\lambda_0 = \mu_1$ and $\lambda_n^C = \mu_1 - \gamma$ in the Eq. (\ref{typeD}).
This allows us to employ the relation (\ref{sv}) and the property $S_n (\mu_1 , \dots , \mu_1 - \gamma | Y^{1,n}) = 0$
described in Lemma \ref{zero}. In this way we obtain the expression 
\<
\label{uv}
S_n (\check{X}^{1,n-1} | Y^{1,n}) = \sum_{i=1}^n m_i^{(D)} \; V (X^{1,n-1} | Y_i^{1,n}) \; , 
\>
with
\<
\label{mid}
m_i^{(D)} &=& - \prod_{j=1}^{n-1} b(\lambda_j^C - \mu_1) \prod_{\stackrel{j=1}{j \neq i}}^{n} a(\lambda_j^B - \mu_1) \left. \frac{\widetilde{N}_i^{(B)}}{\widetilde{N}_n^{(C)}} \right|_{\stackrel{\lambda_0 = \mu_1}{\lambda_n^C = \mu_1 - \gamma}} \nonumber \\
&=& \frac{c}{b(\mu_1 - \lambda_i^B)} \prod_{j=1}^{L} \frac{b(\mu_j - \lambda_i^B)}{a(\mu_j - \mu_1)} 
\prod_{j=1}^{n-1} a(\lambda_j^C - \mu_1) \prod_{\stackrel{j=1}{j \neq i}}^{n} a(\lambda_j^B - \mu_1) \frac{a(\lambda_i^B - \lambda_j^B)}{b(\lambda_i^B - \lambda_j^B)} \; . \nonumber \\
\>

\paragraph{Step 7.} The Eq. (\ref{typeD}) will contain only terms of the form $S_n ( X^{1,n} | \check{Y}^{1,n-1} )$
for the specialisation $\lambda_n^{B} = \mu_1$ since the coefficient $\widetilde{N}_n^{(B)}$ vanishes.
In that case we can consider the formula (\ref{tv1}) and also set $\lambda_n^{C} = \mu_1 - \gamma$. By doing so
we are left with the relation,
\<
\label{typeDW}
\widetilde{J}_0 \; W ( X^{1,n-1} | Y^{1,n-1} ) + \sum_{i=1}^{n-1} \widetilde{K}_i^{(B)} W ( X^{1,n-1} | Y_i^{0,n-1}) + \sum_{i=1}^{n-1} \widetilde{K}_i^{(C)} W ( X_i^{0,n-1} | Y^{1,n-1}) = 0 \; , \nonumber \\
\>  
whose coefficients $\widetilde{J}_0$ and $\widetilde{K}_i^{(B,C)}$ correspond respectively to the coefficients
$\widetilde{M}_0$ and $\widetilde{N}_i^{(B,C)}$ given in (\ref{coeffD}) under the mappings 
$L \mapsto L-1$, $n \mapsto n-1$ and $\mu_i \mapsto \mu_{i+1}$.

\paragraph{Step 8.} Set $\lambda_n^C = \mu_1$ in Eq. (\ref{typeD}) taking into account that the coefficient
$\widetilde{N}_n^{(C)}$ vanishes. For this particular specialisation Eq. (\ref{typeD}) contains only
terms of the form $S_n (\check{X}^{1,n-1} | Y^{1,n})$. In this way we use formula (\ref{uv}) and set
$\lambda_n^B = \mu_1 - \gamma$ to obtain the relation
\<
\label{typeDV}
\widetilde{J}_0 \; V ( X^{1,n-1} | Y^{1,n-1} ) + \sum_{i=1}^{n-1} \widetilde{K}_i^{(B)} V ( X^{1,n-1} | Y_i^{0,n-1}) + \sum_{i=1}^{n-1} \widetilde{K}_i^{(C)} V ( X_i^{0,n-1} | Y^{1,n-1}) = 0 \; . \nonumber \\
\>  
Thus, considering Eq. (\ref{typeDW}) in addition to (\ref{typeDV}), we can see that both functions $V$ and $W$ also obeys
the Eq. (\ref{typeD}) under the mappings $L \mapsto L-1$, $n \mapsto n-1$ and $\mu_i \mapsto \mu_{i+1}$.

\paragraph{Step 9.} Next we consider Eq. (\ref{typeD}) with $\lambda_0 = \mu_1-\gamma$ and make use of the relations
(\ref{rv1}) and (\ref{sv1}). By doing so we obtain the expression
\<
\label{reA}
S_n ( X^{1,n} | Y^{1,n} ) =  \sum_{i,j = 1}^n \frac{\Theta_{i,j}}{\mathcal{F}} \; V( X_i^{1,n} | Y_j^{1,n} ) +  \sum_{i,j = 1}^n \frac{\Phi_{i,j}}{\mathcal{F}} \; W( X_j^{1,n} | Y_i^{1,n} ) \; , 
\> 
where
\<
\label{thetaphiA}
\Theta_{i,j} &=&   \frac{c^2}{b(\lambda_i^C - \mu_1) b(\mu_1 -  \lambda_j^B)} \prod_{k=1}^{L} \frac{a(\lambda_i^C - \mu_k) b(\mu_k - \lambda_j^B)}{a(\mu_1 - \mu_k) a(\mu_k - \mu_1)} \nonumber \\
&& \times \prod_{\stackrel{k=1}{k \neq i}}^{n} a(\lambda_k^C - \mu_1) \frac{a(\lambda_k^C - \lambda_i^C)}{b(\lambda_k^C - \lambda_i^C)} \prod_{\stackrel{k=1}{k \neq j}}^{n} a(\lambda_k^B - \mu_1) \frac{a(\lambda_j^B - \lambda_k^B)}{b(\lambda_j^B - \lambda_k^B)} \nonumber \\
\Phi_{i,j} &=&  \frac{c^2}{b(\lambda_i^B - \mu_1) b(\lambda_j^C - \mu_1)} \prod_{k=1}^{L} \frac{a(\lambda_i^B - \mu_k) b(\mu_k - \lambda_j^C)}{a(\mu_1 - \mu_k) a(\mu_k - \mu_1)} \nonumber \\
&& \prod_{\stackrel{k=1}{k \neq i}}^{n} a(\lambda_k^B - \mu_1) \frac{a(\lambda_k^B - \lambda_i^B)}{b(\lambda_k^B - \lambda_i^B)} \prod_{\stackrel{k=1}{k \neq j}}^{n} a(\lambda_k^C - \mu_1) \frac{a(\lambda_j^C - \lambda_k^C)}{b(\lambda_j^C - \lambda_k^C)}  \nonumber \\
\mathcal{F} &=& \prod_{k=1}^{n} \frac{a(\lambda_k^C - \mu_1)}{b(\lambda_k^C - \mu_1)} - \prod_{k=1}^{n} \frac{a(\lambda_k^B - \mu_1)}{b(\lambda_k^B - \mu_1)} \;\; . \nonumber \\
\>

\paragraph{Step 10.} For completeness we also set $\lambda_0 = \mu_1$ in Eq. (\ref{typeA}) and consider the expressions
(\ref{tv1}) and (\ref{uv}). This procedure yields the same formula (\ref{reA}).
\bigskip
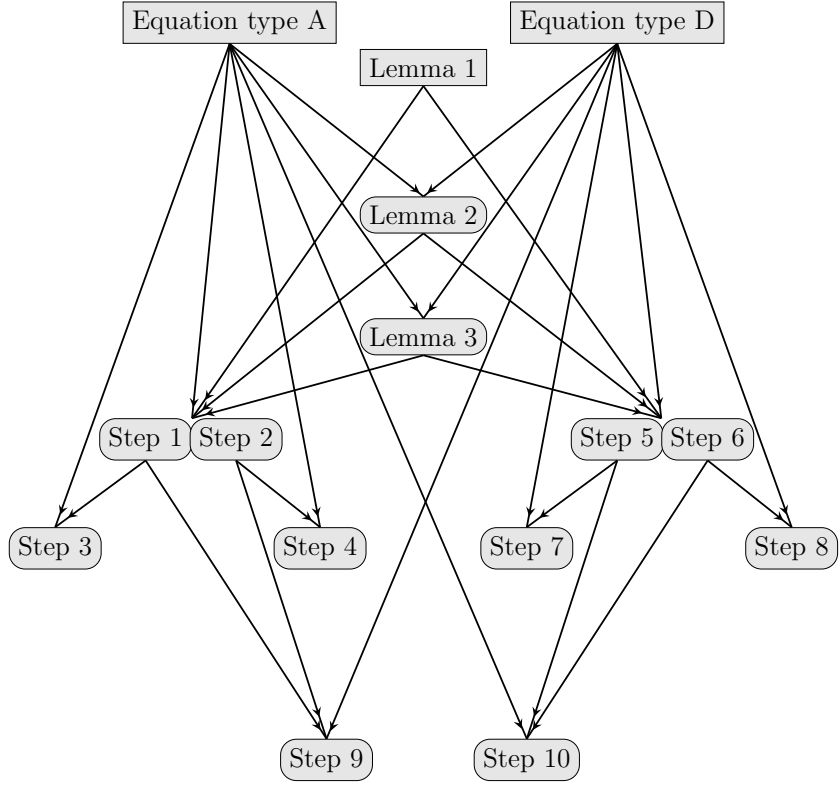
\begin{figure} \centering
\scalebox{0.85}{
\begin{tikzpicture}[>=stealth]
\path (0,0) node[rectangle,fill=gray!20!white,draw] (p1) {Equation type A}
      (6,0) node[rectangle,fill=gray!20!white,draw] (p2) {Equation type D};
\begin{scope}[yshift=-4.3cm]
\path (3,-0.6) node[rectangle,rounded corners=7pt,fill=gray!20!white,draw] (p3) {Lemma \ref{symm}}
      (3,1.3) node[rectangle,rounded corners=7pt,fill=gray!20!white,draw] (p4) {Lemma \ref{zero}}
      (3,3.6) node[rectangle,fill=gray!20!white,draw] (p5) {Lemma \ref{polst}};
\end{scope}
\begin{scope}[yshift=-6.5cm]
\path (-1.3,0) node[rectangle,rounded corners=7pt,fill=gray!20!white,draw] (p6) {Step $1$}
      (0.1,0) node[rectangle,rounded corners=7pt,fill=gray!20!white,draw] (p7) {Step $2$};
\end{scope}
\begin{scope}[yshift=-6.5cm, xshift=7.3cm]
\path (-1.3,0) node[rectangle,rounded corners=7pt,fill=gray!20!white,draw] (p8) {Step $5$}
      (0.1,0) node[rectangle,rounded corners=7pt,fill=gray!20!white,draw] (p9) {Step $6$};
\end{scope}
\begin{scope}[yshift=-8.2cm]
\path (-2.7,0) node[rectangle,rounded corners=7pt,fill=gray!20!white,draw] (p10) {Step $3$}
      (1.4,0) node[rectangle,rounded corners=7pt,fill=gray!20!white,draw] (p11) {Step $4$};
\end{scope}
\begin{scope}[yshift=-8.2cm, xshift=7.3cm]
\path (-2.7,0) node[rectangle,rounded corners=7pt,fill=gray!20!white,draw] (p12) {Step $7$}
      (1.4,0) node[rectangle,rounded corners=7pt,fill=gray!20!white,draw] (p13) {Step $8$};
\end{scope}
\begin{scope}[yshift=-11.5cm]
\path (1.5,0) node[rectangle,rounded corners=7pt,fill=gray!20!white,draw] (p14) {Step $9$};
\end{scope}
\begin{scope}[yshift=-11.5cm, xshift=7.3cm]
\path (-2.7,0) node[rectangle,rounded corners=7pt,fill=gray!20!white,draw] (p15) {Step $10$};
\end{scope}
\draw [postaction=decorate,decoration={markings, mark=at position 5.1cm with {\arrow[black]{stealth}}}, thick]  (p1.south) -- (p3.north);
\draw [postaction=decorate,decoration={markings, mark=at position 5.1cm with {\arrow[black]{stealth}}}, thick]  (p2.south) -- (p3.north);
\draw [postaction=decorate,decoration={markings, mark=at position 3.75cm with {\arrow[black]{stealth}}}, thick]  (p1.south) -- (p4.north);
\draw [postaction=decorate,decoration={markings, mark=at position 3.75cm with {\arrow[black]{stealth}}}, thick]  (p2.south) -- (p4.north);
\draw [postaction=decorate,decoration={markings, mark=at position 5.7cm with {\arrow[black]{stealth}}}, thick]  (p1.south) -- (p6.north east);
\draw [postaction=decorate,decoration={markings, mark=at position 3.5cm with {\arrow[black]{stealth}}}, thick]  (p3.south) -- (p6.north east);
\draw [postaction=decorate,decoration={markings, mark=at position 4.4cm with {\arrow[black]{stealth}}}, thick]  (p4.south) -- (p6.north east);
\draw [postaction=decorate,decoration={markings, mark=at position 5.95cm with {\arrow[black]{stealth}}}, thick]  (p5.south) -- (p6.north east);
\draw [postaction=decorate,decoration={markings, mark=at position 7.8cm with {\arrow[black]{stealth}}}, thick]  (p1.south) -- (p10.north);
\draw [postaction=decorate,decoration={markings, mark=at position 7.5cm with {\arrow[black]{stealth}}}, thick]  (p1.south) -- (p11.north);
\draw [postaction=decorate,decoration={markings, mark=at position 11.6cm with {\arrow[black]{stealth}}}, thick]  (p2.south) -- (p14.north);
\draw [postaction=decorate,decoration={markings, mark=at position 5.0cm with {\arrow[black]{stealth}}}, thick]  (p6.south) -- (p14.north);
\draw [postaction=decorate,decoration={markings, mark=at position 4.2cm with {\arrow[black]{stealth}}}, thick]  (p7.south) -- (p14.north);
\draw [postaction=decorate,decoration={markings, mark=at position 1.5cm with {\arrow[black]{stealth}}}, thick]  (p6.south) -- (p10.north);
\draw [postaction=decorate,decoration={markings, mark=at position 1.5cm with {\arrow[black]{stealth}}}, thick]  (p7.south) -- (p11.north);
\draw [postaction=decorate,decoration={markings, mark=at position 5.7cm with {\arrow[black]{stealth}}}, thick]  (p2.south) -- (p9.north west);
\draw [postaction=decorate,decoration={markings, mark=at position 3.5cm with {\arrow[black]{stealth}}}, thick]  (p3.south) -- (p9.north west);
\draw [postaction=decorate,decoration={markings, mark=at position 4.4cm with {\arrow[black]{stealth}}}, thick]  (p4.south) -- (p9.north west);
\draw [postaction=decorate,decoration={markings, mark=at position 5.95cm with {\arrow[black]{stealth}}}, thick]  (p5.south) -- (p9.north west);
\draw [postaction=decorate,decoration={markings, mark=at position 7.8cm with {\arrow[black]{stealth}}}, thick]  (p2.south) -- (p13.north);
\draw [postaction=decorate,decoration={markings, mark=at position 7.5cm with {\arrow[black]{stealth}}}, thick]  (p2.south) -- (p12.north);
\draw [postaction=decorate,decoration={markings, mark=at position 11.6cm with {\arrow[black]{stealth}}}, thick]  (p1.south) -- (p15.north);
\draw [postaction=decorate,decoration={markings, mark=at position 5.0cm with {\arrow[black]{stealth}}}, thick]  (p9.south) -- (p15.north);
\draw [postaction=decorate,decoration={markings, mark=at position 4.2cm with {\arrow[black]{stealth}}}, thick]  (p8.south) -- (p15.north);
\draw [postaction=decorate,decoration={markings, mark=at position 1.5cm with {\arrow[black]{stealth}}}, thick]  (p9.south) -- (p13.north);
\draw [postaction=decorate,decoration={markings, mark=at position 1.5cm with {\arrow[black]{stealth}}}, thick]  (p8.south) -- (p12.north);
\end{tikzpicture}}
\caption{Interdependence among the Steps $1$-$10$.}
\label{steps}
\end{figure}

The interdependence among the Steps $1$-$10$ is schematically depicted in \Figref{steps},
and at this stage we have already gathered the main ingredients required to obtain an explicit
expression for the function $S_n$. For instance, in the Step 9 we have obtained a
formula expressing the function $S_n$ as a linear combination of auxiliary
functions $V$ and $W$ defined respectively by (\ref{sv}) and (\ref{rv}). This process
can be thought of as a separation of variables induced by the special zeroes described in Lemma \ref{zero}.
On the other hand, in the Steps 3 and 8 we have shown that the function $V$ satisfy the system of functional
equations (\ref{typeA},\ref{typeD}) under the mappings $L \mapsto L-1$,
$n \mapsto n-1$ and $\mu_i \mapsto \mu_{i+1}$. The same holds for the function $W$ as demonstrated
in the Steps 4 and 7. Thus, since $V$ and $W$ are polynomials of the same order, the linearity
of the system of equations (\ref{typeA}, \ref{typeD}) tells us that $W = \alpha V$ where $\alpha$
is a constant. This uniqueness property employed in our argumentation has been proven in 
\cite{Galleas_2011} under very general conditions.
Moreover, since $S_n$ is essentially a multivariate polynomial and so is the auxiliary function
$V$ due to (\ref{sv}), the residues of $(\Theta_{i,j} + \alpha \Phi_{j,i})/ \mathcal{F}$ must vanish.
This condition tells us that $\alpha = 1$. Thus for $n \leq L$ the formula (\ref{reA}) allows
us to obtain the function $S_n$ up to an overall multiplicative constant starting with the solution
of (\ref{typeA}, \ref{typeD}) for the case $n=1$. The solution $S_1 (\lambda_1^C | \lambda_1^B)$
has been obtained in \Appref{sec:n1} and in what follows we shall demonstrate that the iteration procedure
described by (\ref{reA}) can be mimicked by a multiple contour integral.

\paragraph{Multiple contour integral.} According to the above discussion the expressions
(\ref{reA}) and (\ref{thetaphiA}) can be rewritten as 
\<
\label{omij}
S_n ( X^{1,n} | Y^{1,n} ) = \mathcal{K} \sum_{i,j=1}^{n} \Omega_{i,j} V( X_i^{1,n} | Y_j^{1,n} )
\>
where
\<
\label{KOM}
\mathcal{K} &=& \frac{\prod_{k=1}^{n} a(\lambda_k^C - \mu_1) a(\lambda_k^B - \mu_1)}{\prod_{k=2}^{L} a(\mu_1 - \mu_k) a(\mu_k - \mu_1)} \left[ \prod_{k=1}^{n} \frac{a(\lambda_k^C - \mu_1)}{b(\lambda_k^C - \mu_1)} - \prod_{k=1}^{n} \frac{a(\lambda_k^B - \mu_1)}{b(\lambda_k^B - \mu_1)} \right]^{-1}  \nonumber \\
\Omega_{i,j} &=& \frac{1}{a(\lambda_i^C - \mu_1) b(\lambda_i^C - \mu_1) a(\lambda_j^B - \mu_1) b(\lambda_j^B - \mu_1)} \nonumber \\
&& \times \left[ \prod_{k=1}^{L} a(\lambda_j^B - \mu_k) b(\mu_k - \lambda_i^C) \prod_{\stackrel{k=1}{k \neq i}}^{n} \frac{a(\lambda_i^C - \lambda_k^C)}{b(\lambda_i^C - \lambda_k^C)} \prod_{\stackrel{k=1}{k \neq j}}^{n} \frac{a(\lambda_k^B - \lambda_j^B)}{b(\lambda_k^B - \lambda_j^B)} \right. \nonumber \\
&& \quad - \left. \prod_{k=1}^{L} a(\lambda_i^C - \mu_k) b(\mu_k - \lambda_j^B) \prod_{\stackrel{k=1}{k \neq i}}^{n} \frac{a(\lambda_k^C - \lambda_i^C)}{b(\lambda_k^C - \lambda_i^C)} \prod_{\stackrel{k=1}{k \neq j}}^{n} \frac{a(\lambda_j^B - \lambda_k^B)}{b(\lambda_j^B - \lambda_k^B)} \right] \; .
\>
Moreover, the formula (\ref{omij}) suggests that the function $S_n$ can be expressed as
\<
\label{int} 
S_n ( X^{1,n} | Y^{1,n} ) =  \oint \dots \oint \prod_{i=1}^{n} \frac{\dd w_i}{2 \ii \pi} \frac{\dd \bar{w}_i}{2 \ii \pi} \frac{H(w_1 , \dots , w_n | \bar{w}_1 , \dots , \bar{w}_n)}{ \prod_{i,j=1}^{n} b(w_i - \lambda_j^C) b(\bar{w}_i - \lambda_j^B)} \; ,
\>
where the integrals over the set of variables $\{ w_i \}$ are performed around the contours 
enclosing solely the poles at  $w_i = \lambda_j^C$. On the other hand, the contours associated with
the integration over the set $\{ \bar{w}_i \}$ contain only the poles at $\bar{w}_i = \lambda_j^B$.
The function $H$ is also assumed to be independent of the variables $\lambda_j^{B,C}$ in such a way that 
the only poles contributing to the evaluation of (\ref{int}) are due to the zeroes of 
$\prod_{i,j=1}^{n} b(w_i - \lambda_j^C) b(\bar{w}_i - \lambda_j^B)$.
Now we can integrate the formula (\ref{int}) over the variables $w_1$ and $\bar{w}_1$, and by doing so we
generate the summations over the indexes $i$ and $j$ appearing in (\ref{int}). This procedure allows us to
look for a term by term identification, and the iteration procedure described by (\ref{omij})
is realised if we are able to exhibit a function $H$ satisfying the condition
\<
\label{hdot}
\left. H \right|_{\stackrel{w_1 = \lambda_i^C}{\bar{w}_1 = \lambda_j^B}} &=& \prod_{\stackrel{k=1}{k \neq i}}^{n} b(\lambda_i^C - \lambda_k^C) \prod_{\stackrel{k=1}{k \neq j}}^{n} b(\lambda_j^B - \lambda_k^B) \prod_{k=2}^{n} b(w_k - \lambda_i^C) \prod_{k=2}^{n} b(\bar{w}_k - \lambda_j^B) \nonumber \\
&& \times \; \mathcal{K} \; \Omega_{i,j} \; \bar{H} (w_2 , \dots, w_n | \bar{w}_2 , \dots , \bar{w}_n) \; .
\>
In its turn the function $\bar{H}$ consists of the function $H$, up to an overall multiplicative constant, under the
mappings $L \mapsto L-1$, $n \mapsto n-1$ and $\mu_i \mapsto \mu_{i+1}$ \footnote{Strictly speaking, the relation
(\ref{hdot}) is only required to be valid when integrated as $\oint \dots \oint \left[ \quad  \right] \prod_{i=2}^{n} \dd w_i \dd \bar{w}_i$.}.
Also, it is important to remark here that the relation (\ref{hdot}) needs to be valid for any $i,j \in [1,n]$ and that is required to
hold only when integrated according to (\ref{int}). Thus, under these requirements, we simply need to consider the relation
(\ref{hdot}) under the mappings $\lambda_i^C \mapsto w_1$, $\lambda_j^B \mapsto \bar{w}_1$, $\lambda_k^C \mapsto w_k$ for $k \neq i$ and 
$\lambda_k^B \mapsto \bar{w}_k$ for $k \neq i$ to obtain the relation
\<
\label{REint}
&& H(w_1 , \dots , w_n | \bar{w}_1 , \dots , \bar{w}_n) = \frac{\bar{H}(w_2 ,\dots , w_n | \bar{w}_2 , \dots , \bar{w}_n)}{b(w_1 - \mu_1) b(\bar{w}_1 - \mu_1)} \prod_{k=2}^{n} b(w_1 - w_k)^2  b(\bar{w}_1 - \bar{w}_k)^2 \nonumber \\
&& \times \frac{ \prod_{k=2}^{n} a(w_k - \mu_1) a(\bar{w}_k - \mu_1)}{\prod_{k=2}^{L} a(\mu_1 - \mu_k) a(\mu_k - \mu_1)} \left[ \prod_{k=1}^{n} \frac{a(w_k - \mu_1)}{b(w_k - \mu_1)} - \prod_{k=1}^{n} \frac{a(\bar{w}_k - \mu_1)}{b(\bar{w}_k - \mu_1)} \right]^{-1} \nonumber \\
&& \times \left[ \prod_{k=1}^{L} a(\bar{w}_1 - \mu_k) b(\mu_k - w_1) \prod_{k=2}^{n} \frac{a(w_1 - w_k)}{b(w_1 - w_k)} \prod_{k=2}^{n} \frac{a(\bar{w}_k - \bar{w}_1)}{b(\bar{w}_k - \bar{w}_1)} \right. \nonumber \\
&& \qquad \qquad  \left. - \prod_{k=1}^{L} a(w_1 - \mu_k) b(\mu_k - \bar{w}_1) \prod_{k=2}^{n} \frac{a(w_k - w_1)}{b(w_k - w_1)} \prod_{k=2}^{n} \frac{a(\bar{w}_1 - \bar{w}_k)}{b(\bar{w}_1 - \bar{w}_k)} \right] \; .
\>
Now the expression (\ref{REint}) can be readily iterated once we know the function $H(w_1 | \bar{w}_1)$. 
From formula (\ref{sol1}) we can immediately read that \footnote{Alternative contour integrals expressions are also possible
for the case $n=1$. Here we have chosen (\ref{int1}) in order to have a formula compatible with (\ref{int}).}
\<
\label{int1}
H(w_1 | \bar{w}_1) = c^2 \frac{\left[ \prod_{k=1}^{L} a(w_1 - \mu_k) b(\bar{w}_1 - \mu_k) - \prod_{k=1}^{L} a(\bar{w}_1 - \mu_k) b(w_1 - \mu_k) \right]}{b(w_1 - \mu_1) b(\bar{w}_1 - \mu_1) \left[ \frac{a(w_1 - \mu_1)}{b(w_1 - \mu_1)} - \frac{a(\bar{w}_1 - \mu_1)}{b(\bar{w}_1 - \mu_1)} \right]} \; ,
\>
and the iteration of (\ref{REint}) leave us with the expression,
\<
\label{HH}
&& H(w_1 , \dots , w_n | \bar{w}_1 , \dots , \bar{w}_n) = \nonumber \\
&&  (-1)^{L n + \frac{n(n+1)}{2}} c^{2 n} \frac{\displaystyle \prod_{j > i}^{n} b(w_i - w_j)^2 b(\bar{w}_i - \bar{w}_j)^2 a(w_j - \mu_i) a(\bar{w}_j - \mu_i)}{\prod_{i=1}^{n} b(w_i - \mu_i) b(\bar{w}_i - \mu_i)} \prod_{i=1}^{n} R_i^{-1} \Lambda_i \; , \nonumber \\
\>
where the functions $R_i$ and $\Lambda_i$ are given by
\<
\label{Ri}
R_i &=& \prod_{k=i}^{n} \frac{a(w_k - \mu_i)}{b(w_k - \mu_i)} - \prod_{k=i}^{n} \frac{a(\bar{w}_k - \mu_i)}{b(\bar{w}_k - \mu_i)} \nonumber \\
\Lambda_i &=& \prod_{k=i}^{L} a(\bar{w}_i - \mu_k) b(\mu_k - w_i) \prod_{k=i+1}^{n} \frac{a(w_i - w_k)}{b(w_i - w_k)} \frac{a(\bar{w}_k - \bar{w}_i)}{b(\bar{w}_k - \bar{w}_i)} \nonumber \\ 
&& - \prod_{k=i}^{L} a(w_i - \mu_k) b(\mu_k - \bar{w}_i) \prod_{k=i+1}^{n} \frac{a(w_k - w_i)}{b(w_k - w_i)} \frac{a(\bar{w}_i - \bar{w}_k)}{b(\bar{w}_i - \bar{w}_k)} \; .
\>
The formula (\ref{HH}) already takes into account the asymptotic behaviour described in 
Lemma \ref{asymp}.

%%%%%%%%%%%%%%%%%%%%%%%%%%%%%%%%%%%%%%%%%%%%%%%%%%%%%%%%%%%%%%%%%%%%%%%%%%%%%%%%
\subsection{On-shell formula}
\label{sec:onshell}

In what follows we shall consider the iteration of the relation (\ref{omij}) when
the set of variables $Y^{1,n}$ are constrained by the Bethe ansatz equations (\ref{BA}).
In that case we can see from (\ref{KOM}) that the term $\Omega_{i,j}$ will be the 
most affected and the determination of the function $H(w_1, \dots , w_n | \bar{w}_1, \dots , \bar{w}_n )$
will require some extra effort. For instance, we shall need to promote the function $H$ to
$H^{(s)}(w_s, \dots , w_n | \bar{w}_s, \dots , \bar{w}_n )$ where the index $s$ is introduced
in order to keep track of how many iterations we have performed. The case $s=1$ then yields the
function $H$ entering in formula (\ref{int}). 

We then follow the procedure described in \Secref{sec:offshell} keeping in mind that 
the constraint (\ref{BA}) should hold at each level of the iteration process. 
By doing so we find the relation,
\<
\label{Hs}
&& H^{(s)}(w_s, \dots , w_n | \bar{w}_s, \dots , \bar{w}_n ) = \frac{(-1)^{L} \prod_{k=1}^{L} b(\bar{w}_s - \mu_k) \prod_{k=s+1}^{n} a(w_k - \mu_s) b(w_k - w_s)}{b(w_s - \mu_s) b(\bar{w}_s - \mu_s) \prod_{k=s+1}^{L} a(\mu_s - \mu_k) a(\mu_k - \mu_s)} \nonumber \\
&& \quad \times \prod_{k=s+1}^{n} a(\bar{w}_s - \bar{w}_k) a(\bar{w}_k - \mu_s) b(\bar{w}_k - \bar{w}_s) \left[ \prod_{k=s}^{n} \frac{a(w_k - \mu_s)}{b(w_k - \mu_s)} - \prod_{k=s}^{n} \frac{a(\bar{w}_k - \mu_s)}{b(\bar{w}_k - \mu_s)} \right]^{-1} \nonumber \\
&& \quad \times \left[ \frac{(-1)^n}{\prod_{k=1}^{s-1} b(\bar{w}_s - \mu_k)} \prod_{k=s}^{L} a(w_s - \mu_k) \prod_{k=s+1}^{n} a(w_k - w_s) \right. \nonumber \\
&& \qquad \qquad  \left. + \frac{\phi_2 \phi_1^{-1}}{\prod_{k=1}^{s-1} a(\bar{w}_s - \mu_k)} \prod_{k=s}^{L} b(w_s - \mu_k) \prod_{k=s+1}^{n} a(w_s - w_k) \prod_{k=1}^{s-1} \frac{a(\bar{w}_s - \bar{w}_k)}{a(\bar{w}_k - \bar{w}_s)} \right] \nonumber \\
&& \quad \times  H^{(s+1)}(w_{s+1}, \dots , w_n | \bar{w}_{s+1}, \dots , \bar{w}_n ) \; ,
\>
which needs to be iterated starting from $s=1$ up to $s=n-1$. Then at the last step we shall also need the
function $H^{(n)}(w_n | \bar{w}_n )$ which is readily obtained from (\ref{sol1}) taking into
account the relation (\ref{BA}). Thus we have
\<
\label{Hn}
&& H^{(n)}(w_n | \bar{w}_n ) = (-1)^n c^2 \frac{\prod_{k=1}^{L} b(\bar{w}_n - \mu_k)}{b(w_n - \mu_n) b(\bar{w}_n - \mu_n)} \left[ \frac{a(w_n - \mu_n)}{b(w_n - \mu_n)} - \frac{a(\bar{w}_n - \mu_n)}{b(\bar{w}_n - \mu_n)} \right]^{-1} \nonumber \\
&& \quad \times \left[ \frac{(-1)^n}{\prod_{k=1}^{n-1} b(\bar{w}_n - \mu_k)} \prod_{k=n}^{L} a(w_n - \mu_k) \right. \nonumber \\
&& \qquad \qquad \quad \left. + \frac{\phi_2 \phi_1^{-1}}{\prod_{k=1}^{n-1} a(\bar{w}_n - \mu_k)} \prod_{k=n}^{L} b(w_n - \mu_k)  \prod_{k=1}^{n-1} \frac{a(\bar{w}_n - \bar{w}_k)}{a(\bar{w}_k - \bar{w}_n)} \right] \; ,  \nonumber \\
\>
and the iteration process above described yields the formula
\<
\label{FON}
&& H(w_1, \dots , w_n | \bar{w}_1, \dots , \bar{w}_n ) = \nonumber \\
&& \qquad (-1)^{(n-1)(L + \frac{n}{2})} c^{2n} \frac{\prod_{i=1}^{n} \prod_{j=1}^{L} b(\bar{w}_i - \mu_j) \prod_{\stackrel{i,j=1}{j>i}}^{n} a(\bar{w}_i - \bar{w}_j) a(\bar{w}_j - \mu_i) b(\bar{w}_j - \bar{w}_i) }{\prod_{i=1}^{n} b(\bar{w}_i - \mu_i)} \nonumber \\
&& \qquad \times  \frac{\prod_{\stackrel{i,j=1}{j>i}}^{n} a(w_j - \mu_i) b(w_j - w_i)}{\prod_{i=1}^{n} b(w_i - \mu_i)} \prod_{i=1}^{n} R_i^{-1} \Lambda_i^{\mbox{{\tiny ON}}} \; ,
\>
where $R_i$ is given in (\ref{Ri}) and

\<
\label{LON}
\Lambda_i^{\mbox{{\tiny ON}}} &=& \frac{(-1)^n}{\prod_{k=1}^{i-1} b(\bar{w}_i - \mu_k)} \prod_{k=i}^{L} a(w_i - \mu_k) \prod_{k=i+1}^{n} a(w_k - w_i) \nonumber \\
&& + \frac{\phi_2 \phi_1^{-1}}{\prod_{k=1}^{i-1} a(\bar{w}_i - \mu_k)} \prod_{k=i}^{L} b(w_i - \mu_k) \prod_{k=i+1}^{n} a(w_i - w_k) \prod_{k=1}^{i-1} \frac{a(\bar{w}_i - \bar{w}_k)}{a(\bar{w}_k - \bar{w}_i)} \; . \nonumber \\
\>

The formulas (\ref{int}) and (\ref{FON}) constitute an expression for the scalar product $S_n$ 
under the constraint (\ref{BA}). In that case we have available Slavnov's formula \cite{Slavnov_1989}
and (\ref{int}, \ref{FON}) then corresponds to an alternative representation. Here it is 
also important to remark that it might still be possible to obtain simpler representations for the 
on-shell scalar product from the functional equations (\ref{typeA}) and (\ref{typeD}). For instance, notice that
the quantity $ \phi_1 N_i^{(B)} + \phi_2 \widetilde{N}_i^{(B)}$ vanishes under the 
condition (\ref{BA}). Thus if we multiply Eq. (\ref{typeA}) by $\phi_1$ and add it to Eq. (\ref{typeD}) multiplied
by $\phi_2$, we are left with the relation
\<
\label{onfun}
K_0 \; S_n (X^{1,n}) + \sum_{i=1}^{n} K_i S_n (X_{i}^{0,n}) = 0 
\>
with coefficients $K_0 = \phi_1 M_0 + \phi_2 \widetilde{M}_0$ and $K_i = \phi_1 N_i^{(C)} + \phi_2 \widetilde{N}_i^{(C)}$ for 
$i \in [1,n]$. In Eq. (\ref{onfun}) we have omitted the dependence of $S_n$ with the set of variables
$Y^{1,n}$ since now they are completely fixed by the relation (\ref{BA}). Moreover, we have checked for small values
of $L$ and $n$ that Eq. (\ref{onfun}) solely is able to determine the on-shell scalar product
up to an overall multiplicative factor. Thus the direct study of (\ref{onfun}) might still offer the possibility
of deriving alternative representations.

%%%%%%%%%%%%%%%%%%%%%%%%%%%%%%%%%%%%%%%%%%%%%%%%%%%%%%%%%%%%%%%%%%%%%%%%%%%%%%%%
\section{Concluding remarks}
\label{sec:conclusion}

In this work we have derived functional relations describing the scalar product
of Bethe vectors for the six-vertex model. The origin of the functional equations
is a consistency condition between the Yang-Baxter algebra and the
highest weight representation theory of the $\alg{sl}_2$ algebra. More
precisely, we have obtained a system formed by two equations (\ref{typeA}, \ref{typeD})
whose resolution then yields a multiple integral representation (\ref{int}, \ref{HH})
for the aforementioned scalar product. Although a multiple integral formula for this
same scalar product had been obtained previously in \cite{deGier_Galleas11} as a specialisation
of Baxter's solution of the Z-invariant six-vertex model, the structure of the formula
obtained here is considerably different.

The resolution of our system of functional equations follows a simple sequence 
of systematic steps which are described in \Secref{sec:offshell}. It is worth remarking
that the method employed in \Secref{sec:offshell} consists of an extension of the
one developed in \cite{Galleas_2011, Galleas_2012} for the partition function of SOS
models with domain wall boundaries.
 
Moreover, in \Secref{sec:onshell} we have also obtained a formula for the scalar product
(\ref{scp}) under the on-shell condition (\ref{BA}). In that case it is well known that
the six-vertex model scalar products are given by Slavnov's determinant formula \cite{Slavnov_1989},
and thus the formula (\ref{int}, \ref{FON}) constitutes an alternative representation.
Furthermore, here we have also obtained a single functional equation describing  
scalar products under the on-shell condition (\ref{onfun}). In this way the direct study
of (\ref{onfun}) might still offer the possibility of deriving different representations.

Recently, there has been a lot of discussion on the possibility of obtaining determinant
representations for on-shell scalar products in models based on higher rank algebras 
\cite{Wheeler_2012, Belliard1, Belliard2, Belliard3, Belliard4}. In particular, some doubts on that possibility have
been put forward in \cite{Belliard1}. Since the method described here is based on the
Yang-Baxter algebra, which is the common algebraic structure underlying integrable vertex 
models, it would be interesting to investigate the extension of this method for higher
rank algebras together with the generalisation of the integral formulas (\ref{int}, \ref{HH} , \ref{FON}).

%%%%%%%%%%%%%%%%%%%%%%%%%%%%%%%%%%%%%%%%%%%%%%%%%%%%%%%%%%%%%%%%%%%%%%%%%%%%%%%%
\section{Acknowledgements}
\label{sec:ack}
The author is supported by the Netherlands Organisation for Scientific
Research (NWO) under the VICI grant 680-47-602. The work of W. Galleas is also
part of the ERC Advanced grant research programme No. 246974, 
{\it ``Supersymmetry: a window to non-perturbative physics"}.

%%%%%%%%%%%%%%%%%%%%%%%%%%%%%%%%%%%%%%%%%%%%%%%%%%%%%%%%%%%%%%%%%%%%%%%%%%%%%%%%
%%%%%%%%%%%%%%%%%%%%%%%%%%%%%%%%%%%%%%%%%%%%%%%%%%%%%%%%%%%%%%%%%%%%%%%%%%%%%%%%
%%%%%%%%%%%%%%%%%%%%%%%%%%%%%%%%%%%%%%%%%%%%%%%%%%%%%%%%%%%%%%%%%%%%%%%%%%%%%%%%
\appendix

%%%%%%%%%%%%%%%%%%%%%%%%%%%%%%%%%%%%%%%%%%%%%%%%%%%%%%%%%%%%%%%%%%%%%%%%%%%%%%%%
\section{Polynomial structure}
\label{sec:POL}

The structure of the function $S_n$ with a given variable $\lambda_i^B$ is
completely encoded in the operator $B(\lambda_i^B)$, while its dependence with
the variable $\lambda_j^C$ is described by the operator $C(\lambda_j^C)$. 
Thus we mainly need to characterise the dependence of the operators $B(\lambda)$
and $C(\lambda)$ with its spectral parameter in order to describe the structure
of $S_n$ with respect to the set of variables $\{ \lambda_i^B \}$ and $\{ \lambda_i^C \}$.
In order to proceed with this analysis it will be useful to recast formula
(\ref{abcd}) as
\<
\label{abcdL}
\mathcal{T}^{(L)} (\lambda) = \left( \begin{matrix}
A_L(\lambda) & B_L(\lambda) \cr
C_L(\lambda) & D_L(\lambda) \end{matrix} \right) \; ,
\>
where the have introduced the index $L$ to emphasise we are considering the ordered
product of $L$ matrices $\mathcal{R}_{a j}$ according to (\ref{rep}).
In its turn the matrix $\mathcal{R}_{a j}$ consists of a matrix in the space
$\mathbb{V}_a$ whose entries are then matrices acting non-trivially on the
$j$-space of the tensor product $\mathbb{V}_1 \otimes \dots \otimes \mathbb{V}_L$.
Here $\mathbb{V}_j \cong \mathbb{C}^2$ and more precisely we have
\<
\label{raj}
\mathcal{R}_{a j}  = \left( \begin{matrix}
\alpha_j  & \beta_j  \nonumber \\
\gamma_j  & \delta_j \end{matrix} \right) 
\>
where
\begin{align}
\label{alfa}
\alpha_j   &= \left( \begin{matrix}
a & 0 \cr
0 & b \end{matrix} \right)_j &
\beta_j &= \left( \begin{matrix}
0 & 0 \cr
c & 0  \end{matrix} \right)_j \nonumber \\
\gamma_j &= \left( \begin{matrix}
0 & c \cr
0 & 0  \end{matrix} \right)_j &
\delta_j &= \left( \begin{matrix}
b & 0 \cr
0 & a \end{matrix} \right)_j \; .
\end{align}
Thus the construction of the operators $A$, $B$, $C$ and $D$ described in
(\ref{rep}) can be implemented in a recursive manner with the help of the relation
\<
\label{recursion}
\mathcal{T}^{(L+1)} (\lambda) = \mathcal{T}^{(L)} (\lambda) \; \mathcal{R}_{a L+1}(\lambda - \mu_{L+1}) 
\>
and initial conditions
\begin{align}
A_1 (\lambda) = & \alpha_1 (\lambda - \mu_1) & B_1 (\lambda) & = \beta_1 (\lambda - \mu_1) \nonumber \\
C_1 (\lambda) = & \gamma_1 (\lambda - \mu_1) & D_1 (\lambda) & = \delta_1 (\lambda - \mu_1) \; . 
\end{align}
In terms of its components the recursion relation (\ref{recursion}) reads
\<
\label{reABCD}
A_{L+1} (\lambda) &=& A_{L} (\lambda) \alpha_{L+1} (\lambda - \mu_{L+1}) + B_{L} (\lambda) \gamma_{L+1} (\lambda - \mu_{L+1}) \nonumber \\
B_{L+1} (\lambda) &=& A_{L} (\lambda) \beta_{L+1} (\lambda - \mu_{L+1}) + B_{L} (\lambda) \delta_{L+1} (\lambda - \mu_{L+1}) \nonumber \\
C_{L+1} (\lambda) &=& C_{L} (\lambda) \alpha_{L+1} (\lambda - \mu_{L+1}) + D_{L} (\lambda) \gamma_{L+1} (\lambda - \mu_{L+1}) \nonumber \\
D_{L+1} (\lambda) &=& C_{L} (\lambda) \beta_{L+1} (\lambda - \mu_{L+1}) + D_{L} (\lambda) \delta_{L+1} (\lambda - \mu_{L+1}) \; ,
\> 
which allows us to infer the structure of the operators $A(\lambda)$, $B(\lambda)$, $C(\lambda)$ and $D(\lambda)$
with respect to the spectral parameter $\lambda$. For that it is important to keep in mind that the spectral
parameter $\lambda$ only enters those operators through the functions $a$ and $b$ since $c$ depends only on the
parameter $\gamma$.
In terms of the variables $x = e^{2 \lambda}$ and $q = e^{\gamma}$ we have $a(x) = 2^{-1} x^{-\frac{1}{2}} (x q - q^{-1})$,
$b(x) = 2^{-1} x^{-\frac{1}{2}} (x  - 1)$ and $c(x) = 2^{-1} (q - q^{-1})$, and in this way the entries of
$\mathcal{R}_{aj}$ can be written in the form
\begin{align}
\label{alfafa}
\alpha_j (x) = & 2^{-1} x^{-\frac{1}{2}} P_{\alpha}(x)      & \beta_j (x) & =  Q_{\beta} \nonumber \\
\gamma_j (x) = & Q_{\gamma}  & \delta_j (x) & = 2^{-1} x^{-\frac{1}{2}} P_{\delta} (x) \; .
\end{align}
The terms $P_{\alpha}(x)$ and $P_{\delta}(x)$ in (\ref{alfafa}) are polynomials of order $1$, while 
$Q_{\beta}$ and $Q_{\gamma}$ are constants. Under these considerations the iteration of (\ref{reABCD})
tells us that the operators $B(\lambda)$ and $C(\lambda)$ are of the form,
\<
\label{BC}
B(\lambda) = x^{- \frac{(L-1)}{2}} P_{B} (x) \qquad \qquad C(\lambda) = x^{- \frac{(L-1)}{2}} P_{C} (x) \;
\>
where $P_B (x)$ and $P_C (x)$ are both polynomials of order $L-1$.
Thus the definition (\ref{scp}) combined with the property (\ref{BC}) immediately tell
us that the scalar product $S_n$ is of the form
\<
S_n  = \prod_{i=1}^{n} (x_i^B x_i^C )^{-\frac{L-1}{2}} \bar{S}_n (x_1^C , \dots , x_n^C | x_1^B , \dots , x_n^B ) \; ,
\>
where $x_i^{B,C} = e^{2 \lambda_i^{B,C}}$ and $\bar{S}_n$ is a polynomial of order $L-1$ in each one of its variables.

%%%%%%%%%%%%%%%%%%%%%%%%%%%%%%%%%%%%%%%%%%%%%%%%%%%%%%%%%%%%%%%%%%%%%%%%%%%%%%%%
\section{Special zeroes}
\label{sec:ZERO}

The method considered in \Secref{sec:offshell} for solving the system of Eqs.
(\ref{typeA}, \ref{typeD}) relies on the identification of certain
zeroes of $S_n$ as function of the set of variables $\{ \lambda_i^{B} \}$ and $\{ \lambda_i^{C} \}$.
This is due to the fact that we are interested in a polynomial solution, and as such it can be
characterised by its zeroes. In what follows we shall proceed with the identification of those zeroes for particular
values of $n$ for illustrative purposes, and subsequently consider the general case.

\subsection{Case $n=2$}
The coefficients $N_1^{(C)}$ and $N_2^{(C)}$ vanish for the specialisation $\lambda_1^C = \mu_1$
and $\lambda_2^C = \mu_1 - \gamma$. Thus for this particular choice of variables $\lambda_{1,2}^C$,
Eq. (\ref{typeA}) simplifies to
\<
\label{N2a}
M_0 S_2 (\mu_1 , \mu_1 - \gamma | \lambda_1^B , \lambda_2^B) + N_1^{B} S_2 (\mu_1 , \mu_1 - \gamma | \lambda_0 , \lambda_2^B)
+ N_2^{B} S_2 (\mu_1 , \mu_1 - \gamma | \lambda_0 , \lambda_1^B) = 0 \; . \nonumber \\
\>
Next we set $\lambda_0 = \mu_1 - \gamma$ noticing that Eq. (\ref{N2a}) does not simplify significantly.
Nevertheless, the function $M_0$ then consists of a single term. In this way we obtain the following relation,
\<
\label{N2b}
S_2 (\mu_1 , \mu_1 - \gamma | \lambda_1^B , \lambda_2^B) =  &-& \frac{N_1^{(B)}}{M_0} S_2 (\mu_1 , \mu_1 - \gamma | \mu_1 - \gamma , \lambda_2^B) \nonumber \\
&-& \frac{N_2^{(B)}}{M_0} S_2 (\mu_1 , \mu_1 - \gamma | \mu_1 - \gamma , \lambda_1^B) \; . 
\>
Eq. (\ref{typeD}) also simplifies similarly under the specialisation $\lambda_1^C = \mu_1$ and $\lambda_2^C = \mu_1 - \gamma$.
In that case it reads, 
\<
\label{N2c}
\widetilde{M}_0 S_2 (\mu_1 , \mu_1 - \gamma | \lambda_1^B , \lambda_2^B) + \widetilde{N}_1^{B} S_2 (\mu_1 , \mu_1 - \gamma | \lambda_0 , \lambda_2^B)
+ \widetilde{N}_2^{B} S_2 (\mu_1 , \mu_1 - \gamma | \lambda_0 , \lambda_1^B) = 0 \; , \nonumber \\
\>
and we can substitute (\ref{N2b}) into (\ref{N2c}) to obtain 
\<
\label{N2d}
&& Q_0 S_2 (\mu_1 , \mu_1 - \gamma | \mu_1 - \gamma , \lambda_0) \nonumber \\
&& + Q_1 S_2 (\mu_1 , \mu_1 - \gamma | \mu_1 - \gamma , \lambda_1^B) + Q_2 S_2 (\mu_1 , \mu_1 - \gamma | \mu_1 - \gamma , \lambda_2^B) = 0 \; .
\>
In terms of variables $x_i^B = e^{2 \lambda_i^B}$, the functions $Q_i$ are rational functions whose explicit forms
are not enlightening. Nevertheless, they possess very non-trivial zeroes and by setting $\lambda_i^B = \{ r_i \; | \; Q_i (r_i) = 0 \}$
for $i=1,2$ we are left with the relation
\<
\label{N2e}
\left. Q_0 \right|_{\stackrel{\lambda_1^B = r_1}{\lambda_2^B = r_2}} S_2 (\mu_1 , \mu_1 - \gamma | \mu_1 - \gamma , \lambda_0) = 0 \; .
\>
Thus we can conclude that $S_2 (\mu_1 , \mu_1 - \gamma | \mu_1 - \gamma , \lambda_0)$ vanishes and by taking this result
back to (\ref{N2b}) we obtain $S_2 (\mu_1 , \mu_1 - \gamma | \lambda_1^B , \lambda_2^B) = 0$. The same procedure
can be employed considering initially the variables $\lambda_i^B$ instead of $\lambda_i^C$, and by doing so
we end up with the property $S_2 ( \lambda_1^C , \lambda_2^C | \mu_1 , \mu_1 - \gamma ) = 0$.

\subsection{Case $n=3$}

We set $\lambda_1^C = \mu_1$ and $\lambda_2^C = \mu_1 - \gamma$ in Eq. (\ref{typeA}) such that
$N_1^{(C)} = N_2^{(C)} = 0$. In that case (\ref{typeA}) simplifies to
\<
\label{N3a}
&& M_0 S_3 (\mu_1 , \mu_1 - \gamma, \lambda_3^C | \lambda_1^B , \lambda_2^B , \lambda_3^B) + N_3^{(C)} S_3 (\lambda_0 , \mu_1 , \mu_1 - \gamma | \lambda_1^B , \lambda_2^B , \lambda_3^B)  \nonumber \\
&& + N_1^{(B)} S_3 (\mu_1 , \mu_1 - \gamma, \lambda_3^C | \lambda_0 , \lambda_2^B , \lambda_3^B) + N_2^{(B)} S_3 (\mu_1 , \mu_1 - \gamma, \lambda_3^C | \lambda_0 , \lambda_1^B , \lambda_3^B) \nonumber \\
&& + N_3^{(B)} S_3 (\mu_1 , \mu_1 - \gamma, \lambda_3^C | \lambda_0 , \lambda_1^B , \lambda_2^B ) = 0 \; . 
\>
Next we set $\lambda_1^B = \mu_1 - \gamma$ and $\lambda_2^B = \mu_1$ in (\ref{N3a}) and obtain the relation
\<
\label{N3b}
&& M_0 S_3 (\mu_1 , \mu_1 - \gamma, \lambda_3^C | \mu_1 - \gamma , \mu_1 , \lambda_3^B) + N_3^{(C)} S_3 (\lambda_0 , \mu_1 , \mu_1 - \gamma | \mu_1 - \gamma , \mu_1 , \lambda_3^B)  \nonumber \\
&& + N_3^{(B)} S_3 (\mu_1 , \mu_1 - \gamma, \lambda_3^C | \lambda_0 , \mu_1 - \gamma , \mu_1 ) = 0 \; . 
\> 
In addition to that we set $\lambda_0 = \mu_1 - \gamma$ in (\ref{N3b}) which then yields the following relation,
\<
\label{N3c}
&& M_0 S_3 (\mu_1 , \mu_1 - \gamma, \lambda_3^C | \mu_1 - \gamma , \mu_1 , \lambda_3^B) + N_3^{(C)} S_3 (\mu_1 - \gamma , \mu_1 , \mu_1 - \gamma | \mu_1 - \gamma , \mu_1 , \lambda_3^B)  \nonumber \\
&& + N_3^{(B)} S_3 (\mu_1 , \mu_1 - \gamma, \lambda_3^C | \mu_1 - \gamma , \mu_1 - \gamma , \mu_1 ) = 0 \; . 
\> 
Furthermore, the relation (\ref{N3c}) simplifies to
\<
\label{N3d}
S_3 (\mu_1 - \gamma , \mu_1 , \mu_1 - \gamma | \mu_1 - \gamma , \mu_1 , \lambda) = S_3 (\mu_1 , \mu_1 - \gamma, \lambda | \mu_1 - \gamma , \mu_1 - \gamma , \mu_1 ) \; ,
\>
for $\lambda_3^B = \lambda_3^C = \lambda$ since the coefficient $M_0$ is proportional to $b(\lambda_3^C - \lambda_3^B)$.
Then we can see that Eq. (\ref{N3c}) combined with the property (\ref{N3d}) yields the relation
\<
\label{N3e}
S_3 (\mu_1 , \mu_1 - \gamma, \lambda_3^C | \mu_1 - \gamma , \mu_1 , \lambda_3^B) = &-& \frac{N_3^{(C)}}{M_0} S_3 (\mu_1 - \gamma , \mu_1 , \mu_1 - \gamma | \mu_1 - \gamma , \mu_1 , \lambda_3^B) \nonumber \\
&-& \frac{N_3^{(B)}}{M_0} S_3 (\mu_1 - \gamma , \mu_1 , \mu_1 - \gamma | \mu_1 - \gamma , \mu_1 , \lambda_3^C) \; , \nonumber \\
\>
which can be substituted back into (\ref{N3b}) considering the symmetry property (\ref{symm}).  We are then left with an
equation of the form 
\<
\label{N3f}
&& Q_0 S_3 (\mu_1 - \gamma , \mu_1 , \mu_1 - \gamma | \mu_1 - \gamma , \mu_1 , \lambda_0 ) + Q_3^{B} S_3 (\mu_1 - \gamma , \mu_1 , \mu_1 - \gamma | \mu_1 - \gamma , \mu_1 , \lambda_3^{B} ) \nonumber \\
&& + Q_3^{C} S_3 (\mu_1 - \gamma , \mu_1 , \mu_1 - \gamma | \mu_1 - \gamma , \mu_1 , \lambda_3^{C} ) = 0 \; ,
\>
where $Q_0$ and $Q_3^{B,C}$ are rational functions in the variables $x_i^{B,C}$. Then we can set
$\lambda_3^{B,C} = \{ r^{B,C} \; | \; Q_3^{B,C}(r^{B,C}) = 0 \}$ in (\ref{N3f}) and conclude that
$S_3 (\mu_1 - \gamma , \mu_1 , \mu_1 - \gamma | \mu_1 - \gamma , \mu_1 , \lambda_0 ) = 0$. Considering (\ref{N3e}),
the latter also implies in 
\<
\label{N3g}
S_3 (\mu_1 , \mu_1 - \gamma, \lambda_3^C | \mu_1 - \gamma , \mu_1 , \lambda_3^B) = 0 \; .
\>
Next we set $\lambda_0 = \mu_1$ and $\lambda_1^B = \mu_1 - \gamma$ in Eq. (\ref{typeA})
considering the property (\ref{N3g}). By doing so we obtain the following relation,
\<
\label{N3h}
M_0 S_3 (\mu_1 , \mu_1 - \gamma , \lambda_3^C | \mu_1 - \gamma , \lambda_2^B , \lambda_3^B) + N_3^{(C)} S_3 (\mu_1 , \mu_1 , \mu_1 - \gamma | \mu_1 - \gamma , \lambda_2^B , \lambda_3^B) = 0 \; . \nonumber \\
\>
The coefficient $M_0$ in (\ref{N3h}) vanishes for $\lambda_3^C = \mu_1 - \gamma$ while $N_3^{(C)}$ is finite. Thus 
for this particular specialisation of $\lambda_3^C$ we obtain 
$S_3 (\mu_1 , \mu_1 , \mu_1 - \gamma | \mu_1 - \gamma , \lambda_2^B , \lambda_3^B) = 0$, and consequently we can conclude that
\<
\label{N3i}
S_3 (\mu_1 , \mu_1 - \gamma , \lambda_3^C | \mu_1 - \gamma , \lambda_2^B , \lambda_3^B) = 0 \; .
\>

So far we have only considered Eq. (\ref{typeA}) and the next step is to set $\lambda_1^C = \mu_1$,
$\lambda_2^C = \mu_1 - \gamma$ and $\lambda_0 = \mu_1 - \gamma$ in Eq. (\ref{typeD}). We also take into 
account the property (\ref{N3i}) and by doing so we obtain the relation
\<
\label{N3j}
\widetilde{M}_0 S_3 (\mu_1 , \mu_1 - \gamma , \lambda_3^C | \lambda_1^B , \lambda_2^B , \lambda_3^B ) + \widetilde{N}_3^{(C)} S_3 (\mu_1 - \gamma , \mu_1 , \mu_1 - \gamma | \lambda_1^B , \lambda_2^B , \lambda_3^B ) = 0 \; .
\>
The function $\widetilde{M}_0$ possesses non-trivial zeroes and we can set $\lambda_3^C$ in (\ref{N3j})
such that $\widetilde{M}_0$ vanishes in order to conclude that $S_3 (\mu_1 - \gamma , \mu_1 , \mu_1 - \gamma | \lambda_1^B , \lambda_2^B , \lambda_3^B ) = 0$. 
Then from (\ref{N3j}) the latter implies in
\<
\label{N3k}
S_3 (\mu_1 , \mu_1 - \gamma , \lambda_3^C | \lambda_1^B , \lambda_2^B , \lambda_3^B) = 0 \; .
\>

The procedure above described can also be performed with the specialisations of the variables
$\lambda_i^B$ and $\lambda_i^C$ exchanged allowing us to conclude that 
\<
\label{N3l}
S_3 (\lambda_1^C , \lambda_2^C , \lambda_3^C | \mu_1 , \mu_1 - \gamma , \lambda_3^B ) = 0 \; .
\>

\subsection{General case}

We consider Eq. (\ref{typeA}) under the specialisations $\lambda_1^C = \lambda_2^B = \mu_1$,
$\lambda_2^C = \lambda_1^B = \mu_1 - \gamma$ and $\lambda_j^{B,C} = \lambda_{j+1}^{B,C} + \gamma$
for $j \in [3,n-1]$ and collect the results at each step. At the final step the only non-vanishing coefficients
are $M_0$ and $N_n^{(B,C)}$, and we then set $\lambda_0 = \mu_1 - \gamma$. This procedure then yields the formula,
\<
\label{Nna}
S_n (X | Y)  = -\frac{N_n^{(C)}}{M_0} S_n (X^{*} | Y) -\frac{N_n^{(B)}}{M_0} S_n (X | Y^{*}) 
\>
where
\<
\label{Nnb}
X &=& \{ \mu_1 , \mu_1 - \gamma , \lambda_n^C + (n-3)\gamma, \dots , \lambda_n^C + (n-j)\gamma , \dots , \lambda_n^C   \} \nonumber \\
Y &=& \{ \mu_1 - \gamma , \mu_1 , \lambda_n^B + (n-3)\gamma, \dots , \lambda_n^B + (n-j)\gamma , \dots , \lambda_n^B \} \nonumber \\
X^{*} &=& \{ \mu_1 - \gamma , \mu_1 , \mu_1 - \gamma , \lambda_n^C + (n-3)\gamma, \dots , \lambda_n^C + (n-j)\gamma , \dots , \lambda_n^C + \gamma  \} \nonumber \\
Y^{*} &=& \{ \mu_1 - \gamma , \mu_1 - \gamma , \mu_1 , \lambda_n^B + (n-3)\gamma, \dots , \lambda_n^B + (n-j)\gamma , \dots , \lambda_n^B + \gamma \} \; . \nonumber \\
\>
The relation (\ref{Nna}) can now be substituted back into the previous steps leading to it,
and an analysis similar to the one employed for the cases $n=2,3$ allows us to conclude that
\<
\label{Nnc}
S_n (\mu_1 , \mu_1 - \gamma, \lambda_3^C , \dots , \lambda_n^C | \mu_1 - \gamma, \lambda_2^B , \dots , \lambda_n^B ) = 0 \; .
\>
Next we set $\lambda_1^C = \mu_1$, $\lambda_2^C = \mu_1 - \gamma$ and $\lambda_0 = \mu_1 - \gamma$
in Eq. (\ref{typeD}) taking into account the property (\ref{Nnc}). By doing so we obtain the more
general condition 
\<
\label{Nnd}
S_n (\mu_1 , \mu_1 - \gamma, \lambda_3^C , \dots , \lambda_n^C | \lambda_1^B,  \dots , \lambda_n^B ) = 0 \; .
\>

The procedure above described can also be performed with the specialisations of the variables
$\lambda_i^B$ and $\lambda_i^C$ exchanged. In that case we then obtain the vanishing condition
\<
\label{Nne}
S_n (\lambda_1^C,  \dots , \lambda_n^C | \mu_1 , \mu_1 - \gamma, \lambda_3^B , \dots , \lambda_n^B  ) = 0 \; .
\>

%%%%%%%%%%%%%%%%%%%%%%%%%%%%%%%%%%%%%%%%%%%%%%%%%%%%%%%%%%%%%%%%%%%%%%%%%%%%%%%%
\section{$S_n$ as a doubly symmetric function}
\label{sec:sym}

The function $S_n (\lambda_1^{C} , \dots , \lambda_n^{C} | \lambda_1^{B} , \dots , \lambda_n^{B})$
defined by (\ref{scp}) is expected to be invariant under the exchange of variables  
$\lambda_i^{B} \leftrightarrow \lambda_j^{B}$ and $\lambda_i^{C} \leftrightarrow \lambda_j^{C}$
separately. This property is due to the definition (\ref{scp}) and the commutation
relations
\[
\left[ B(\lambda_1) , B(\lambda_2) \right] = \left[ C(\lambda_1) , C(\lambda_2) \right] = 0
\]
described in (\ref{commutAB}) and (\ref{commutAC}).
Nevertheless, once we assume $S_n$ is determined by the Eqs. (\ref{typeA}) and (\ref{typeD}),
it would be desirable this symmetry to be an inherent property of their solutions. This is indeed the
case and it can be demonstrated as follows.

We firstly notice that the coefficients $M_0$ and $N_i^{(B)}$ defined in (\ref{coeffA})
are the only coefficients exhibiting poles when $\lambda_0 \rightarrow \lambda_i^{B}$. 
Moreover, those coefficients also satisfy the property
\<
\label{pp}
\lim_{\lambda_0 \rightarrow \lambda_i^{B}} M_0 \; b(\lambda_0 - \lambda_i^{B}) &=& - \lim_{\lambda_0 \rightarrow \lambda_i^{B}} N_i^{B} \; b(\lambda_0 - \lambda_i^{B}) \nonumber \\
&=& c \prod_{j=1}^{L} a(\lambda_i^{B} - \mu_j) \prod_{\stackrel{j=1}{j \neq i}}^{n} \frac{a(\lambda_j^{B} - \lambda_i^{B})}{b(\lambda_j^{B} - \lambda_i^{B})} \; .
\>
Next we integrate Eq. (\ref{typeA}) over the variable $\lambda_0$ and around the contour
$\mathcal{C}_i^{B}$ enclosing solely the variable $\lambda_i^{B}$. Thus considering the property (\ref{pp})
we obtain the identity
\<
\label{almostB}
&& c \prod_{j=1}^{L} a(\lambda_i^{B} - \mu_j) \prod_{\stackrel{j=1}{j \neq i}}^{n} \frac{a(\lambda_j^{B} - \lambda_i^{B})}{b(\lambda_j^{B} - \lambda_i^{B})} \left[ S_n (\lambda_1^{C} , \dots , \lambda_n^{C} | \lambda_1^{B} , \dots , \lambda_n^{B} ) \right. \nonumber \\
&& \qquad \qquad\qquad \qquad \qquad - \left. S_n (\lambda_1^{C} , \dots , \lambda_n^{C} | \lambda_i^{B} , \lambda_1^{B} , \dots , \lambda_{i-1}^{B} , \lambda_{i+1}^{B} , \dots  , \lambda_n^{B} ) \right] = 0 \; . \nonumber \\
\>
The formula (\ref{almostB}) allows us to conclude that
\<
S_n (\lambda_1^{C} , \dots , \lambda_n^{C} | \lambda_i^{B} , \lambda_1^{B} ,\dots , \lambda_{i-1}^{B} , \lambda_{i+1}^{B} , \dots  , \lambda_n^{B} ) = S_n (\lambda_1^{C} , \dots , \lambda_n^{C} | \lambda_1^{B} , \dots , \lambda_n^{B} ) \nonumber \\
\>
for $i \in [1, L]$, which implies the symmetry relation
\[
\label{sym1}
S_n ( \lambda_1^{C} , \dots , \lambda_n^{C} | \dots , \lambda_i^{B} , \dots , \lambda_j^{B} , \dots ) = S_n ( \lambda_1^{C} , \dots , \lambda_n^{C} | \dots , \lambda_j^{B} , \dots , \lambda_i^{B} , \dots ) \; .
\]

Now in order to demonstrate an equivalent symmetry relation with respect to the set of variables
$\{ \lambda_i^C \}$, we integrate Eq. (\ref{typeA}) over the variable $\lambda_0$
and around the contour $\mathcal{C}_i^{C}$ containing solely the variable $\lambda_i^{C}$.
We also take into account that the coefficients $M_0$ and $N_i^{(C)}$ are the only ones
possessing poles when $\lambda_0 \rightarrow \lambda_i^{C}$.
In this way we obtain the identity
\<
\label{almostC}
&& c \prod_{j=1}^{L} a(\lambda_i^{C} - \mu_j) \prod_{\stackrel{j=1}{j \neq i}}^{n} \frac{a(\lambda_j^{C} - \lambda_i^{C})}{b(\lambda_j^{C} - \lambda_i^{C})} \left[ S_n (\lambda_1^{C} , \dots , \lambda_n^{C} | \lambda_1^{B} , \dots , \lambda_n^{B} ) \right. \nonumber \\
&& \qquad \qquad \qquad \qquad \qquad - \left. S_n ( \lambda_i^{C} , \lambda_1^{C} , \dots , \lambda_{i-1}^{C} , \lambda_{i+1}^{C} , \dots  , \lambda_n^{C} | \lambda_1^{B} , \dots , \lambda_n^{B} ) \right] = 0 \; , \nonumber \\
\>
with the help of the property
\<
\lim_{\lambda_0 \rightarrow \lambda_i^{C}} M_0 \; b(\lambda_0 - \lambda_i^{C}) &=& - \lim_{\lambda_0 \rightarrow \lambda_i^{C}} N_i^{(C)} \; b(\lambda_0 - \lambda_i^{C}) \nonumber \\
&=& - c \prod_{j=1}^{L} a(\lambda_i^{C} - \mu_j) \prod_{\stackrel{j=1}{j \neq i}}^{n} \frac{a(\lambda_j^{C} - \lambda_i^{C})}{b(\lambda_j^{C} - \lambda_i^{C})} \; .
\>
The property $S_n ( \lambda_i^{C} , \lambda_1^{C} ,\dots , \lambda_{i-1}^{C} , \lambda_{i+1}^{C} , \dots  , \lambda_n^{C} | \lambda_1^{B} , \dots , \lambda_n^{B} ) = 
S_n (\lambda_1^{C} , \dots , \lambda_n^{C} | \lambda_1^{B} , \dots , \lambda_n^{B} )$ is then valid for $i \in [1, L]$ which allows us to conclude that
\[
\label{sym2}
S_n ( \dots , \lambda_i^{C} , \dots , \lambda_j^{C} , \dots | \lambda_1^{B} , \dots , \lambda_n^{B} ) = S_n ( \dots , \lambda_j^{C} , \dots , \lambda_i^{C} , \dots | \lambda_1^{B} , \dots , \lambda_n^{B}) \; .
\]
It is important to remark here that the symmetry relations (\ref{sym1}) and (\ref{sym2}) have been obtained solely from the
examination of the Eq. (\ref{typeA}). Alternatively, we could also have derived the same symmetry relations from a similar analysis
of the Eq. (\ref{typeD}).

%%%%%%%%%%%%%%%%%%%%%%%%%%%%%%%%%%%%%%%%%%%%%%%%%%%%%%%%%%%%%%%%%%%%%%%%%%%%%%%%
\section{Asymptotic behaviour}
\label{sec:ASYMP}

Considering the variables $x=e^{2 \lambda}$, $q=e^{\gamma}$ and the conventions
described in (\ref{raj}) and (\ref{alfa}), in the limit $x \rightarrow \infty$ we find
\begin{align}
\alpha & \sim \frac{x^{\frac{1}{2}} q^{\frac{1}{2}}}{2} K  & \beta & \sim \frac{q - q^{-1}}{2} X^{-} \nonumber \\
\gamma & \sim \frac{q - q^{-1}}{2} X^{+}  & \delta & \sim \frac{x^{\frac{1}{2}} q^{\frac{1}{2}}}{2} K^{-1} \; ,
\end{align}
where the operators $K$ and $X^{\pm}$ are explicitly given by
\<
K = \left( \begin{matrix}
q^{\frac{1}{2}} & 0 \cr
0 & q^{-\frac{1}{2}} \end{matrix} \right)
\qquad X^{\pm} = \frac{1}{2} \left( \begin{matrix}
0 & 1 \pm 1 \cr
1 \mp 1 & 0 \end{matrix} \right) \; .
\>
In this particular limit the relations (\ref{reABCD}) can be easily iterated
and we obtain 
\<
\label{BC}
B(\lambda) &\sim & \frac{(q - q^{-1})}{2} q^{\frac{L-1}{2}} x^{\frac{L-1}{2}} e^{-\sum_{j=1}^{L} \mu_j} \sum_{k=1}^{L} e^{\mu_k} P_k^{-} \nonumber \\
C(\lambda) &\sim & \frac{(q - q^{-1})}{2} q^{\frac{L-1}{2}} x^{\frac{L-1}{2}} e^{-\sum_{j=1}^{L} \mu_j} \sum_{k=1}^{L} e^{\mu_k} P_k^{+} \; ,
\>
with operators $P_j^{\pm}$ being defined as
\<
P_j^{\pm} = \bigotimes_{k=1}^{j-1} K^{\mp 1} \otimes X^{\pm} \otimes \bigotimes_{k=j+1}^{L} K^{\pm 1} \; .
\>
In their turn the operators $K^{\pm 1}$ and $X^{\pm}$ satisfy the $q$-deformed $\alg{su}(2)$ algebra, 
\<
K X^{\pm} K^{-1} &=& q^{\pm 1} X^{\pm} \nonumber \\
\left[ X^{+} , X^{-}  \right] &=& \frac{K^2 - K^{-2}}{q - q^{-1}} \; ,
\>
which allows us to demonstrate the relations
\<
\label{PP}
P_i^a P_j^b &=& q^{(a,b)} P_j^b P_i^a \qquad \qquad (i < j) \nonumber \\
( P_i^a )^2 &=& 0 
\>
with symbols $(a,b)$ being defined as
\<
(a,b) = \begin{cases} 2 \quad \quad \; a = \pm , b = \pm \cr -2 \quad \;\; a = \pm , b = \mp \end{cases} \; .
\>

Now with the help of the relations (\ref{PP}) and considering $x_i = e^{2 \lambda_i}$, in the limit 
$x_i \rightarrow \infty$ we obtain the expression
\<
\label{Binf}
\prod_{i=1}^{n} B(x_i) & \sim & \frac{(q - q^{-1})^n}{2^{n L}} q^{n\frac{(L-1)}{2} - n(n-1)} [ n! ]_{q^2} e^{-n \sum_{j=1}^{L} \mu_j} \prod_{i=1}^{n} x_i^{\frac{L-1}{2}} \nonumber \\
&& \qquad \qquad \qquad \times \sum_{1 \leq a_1 < \dots < a_n \leq L} e^{\sum_{j=1}^{n} \mu_{a_j}} \mathop{\overrightarrow\prod}\limits_{1 \le j \le n } P_{a_j}^{-} \; ,
\>
and similarly
\<
\label{Cinf}
\prod_{i=1}^{n} C(x_i) & \sim & \frac{(q - q^{-1})^n}{2^{n L}} q^{n\frac{(L-1)}{2} - n(n-1)} [ n! ]_{q^2} e^{-n \sum_{j=1}^{L} \mu_j} \prod_{i=1}^{n} x_i^{\frac{L-1}{2}} \nonumber \\
&& \qquad \qquad \qquad \times \sum_{1 \leq a_1 < \dots < a_n \leq L} e^{\sum_{j=1}^{n} \mu_{a_j}} \mathop{\overrightarrow\prod}\limits_{1 \le j \le n } P_{a_j}^{+} \; . 
\>
The term $[ n! ]_{q^2}$ appearing in (\ref{Binf}) and (\ref{Cinf}) corresponds to the $q$-factorial function and it
is defined as $[ n! ]_{q^2} = 1 (1 + q^2)(1+ q^2 + q^4) \dots (1+ q^2 + \dots + q^{2(n-1)})$.

The relations (\ref{Binf}) and (\ref{Cinf}) can now be combined according to (\ref{scp}) to obtain the 
following formula,
\<
\label{Sinf}
S_n (x_1^C , \dots , x_n^C | x_1^B , \dots , x_n^B ) &\sim & \frac{(q - q^{-1})^{2n}}{2^{2 n L}} q^{n(L-1) - 2 n(n-1)} [ n! ]_{q^2}^2 e^{-2 n \sum_{j=1}^{L} \mu_j} \prod_{i=1}^{n} (x_i^B x_i^C)^{\frac{L-1}{2}}  \nonumber \\
&& \quad  \times \sum_{\stackrel{1 \leq a_1 < \dots < a_n \leq L}{1 \leq b_1 < \dots < b_n \leq L}} e^{\sum_{j=1}^{n} \mu_{a_j} + \mu_{b_j}} \bra{0} \mathop{\overrightarrow\prod}\limits_{1 \le j \le n } P_{a_j}^{+} \mathop{\overrightarrow\prod}\limits_{1 \le j \le n } P_{b_j}^{-} \ket{0} \; , \nonumber \\
\>
in the limit $x_i^{B,C} \rightarrow \infty$. The next step is to compute the quantity $\bra{0} \mathop{\overrightarrow\prod}\limits_{1 \le j \le n } P_{a_j}^{+} \mathop{\overrightarrow\prod}\limits_{1 \le j \le n } P_{b_j}^{-} \ket{0}$
appearing in (\ref{Sinf}), and for that we notice that under the constraint $1 \leq a_1 < \dots < a_n \leq L$ and
$1 \leq b_1 < \dots < b_n \leq L$ the aforementioned quantity will contribute only when $b_j = a_j$. 
In that case we have $\bra{0} \mathop{\overrightarrow\prod}\limits_{1 \le j \le n } P_{a_j}^{+} \mathop{\overrightarrow\prod}\limits_{1 \le j \le n } P_{a_j}^{-} \ket{0} = q^{n(n-1)}$
and the relation (\ref{Sinf}) then simplifies to
\<
\label{SSinf}
S_n  \sim  \frac{(q - q^{-1})^{2n}}{2^{2 n L}} q^{n(L-n)} [ n! ]_{q^2}^2 e^{-2 n \sum_{j=1}^{L} \mu_j}  \sum_{1 \leq a_1 < \dots < a_n \leq L} e^{2 \sum_{j=1}^{n} \mu_{a_j}} \prod_{i=1}^{n} (x_i^B x_i^C)^{\frac{L-1}{2}}  \; . \nonumber \\
\>

%%%%%%%%%%%%%%%%%%%%%%%%%%%%%%%%%%%%%%%%%%%%%%%%%%%%%%%%%%%%%%%%%%%%%%%%%%%%%%%%
\section{Off-shell solution for the case $n=1$}
\label{sec:n1}

The relation (\ref{omij}) offers a way of building the solution of (\ref{typeA}, \ref{typeD})
recursively. More precisely, the formula (\ref{omij}) establishes a relation between the scalar
product $S_n$ and an auxiliary function $V$ consisting essentially of $S_{n-1}$ for a
lattice of length $L-1$. Thus, since we are considering $n \leq L$, at the final step of 
this iteration procedure we shall need the solution of  Eqs. (\ref{typeA})
and (\ref{typeD}) for the case $n=1$. In that case our system of functional equations explicitly
read
\<
\label{AD1}
M_0 S_1 (\lambda_1^C | \lambda_1^B) + N_1^{(B)} S_1 (\lambda_1^C | \lambda_0) + N_1^{(C)} S_1 (\lambda_0 | \lambda_1^B) &=& 0 \nonumber \\
\widetilde{M}_0 S_1 (\lambda_1^C | \lambda_1^B) + \widetilde{N}_1^{(B)} S_1 (\lambda_1^C | \lambda_0) + \widetilde{N}_1^{(C)} S_1 (\lambda_0 | \lambda_1^B) &=& 0
\>
with coefficients
\begin{align}
M_0  = & \frac{c(\lambda_0 - \lambda_1^B)}{b(\lambda_0 - \lambda_1^B)} \frac{b(\lambda_1^B - \lambda_1^C)}{b(\lambda_0 - \lambda_1^C)} \prod_{j=1}^{L} a(\lambda_0 - \mu_j) 
& N_1^{(B)} & =  \frac{c(\lambda_1^B - \lambda_0)}{b(\lambda_1^B - \lambda_0)} \prod_{j=1}^{L} a(\lambda_1^B - \mu_j) \nonumber \\ 
\widetilde{M}_0  = & \frac{c(\lambda_0 - \lambda_1^B)}{b(\lambda_0 - \lambda_1^B)} \frac{b(\lambda_1^C - \lambda_1^B)}{b(\lambda_0 - \lambda_1^C)} \prod_{j=1}^{L} b(\lambda_0 - \mu_j)
& \widetilde{N}_1^{(B)} & =  \frac{c(\lambda_0 - \lambda_1^B)}{b(\lambda_0 - \lambda_1^B)} \prod_{j=1}^{L} b(\lambda_1^B - \mu_j) \nonumber \\
N_1^{(C)} = & \frac{c(\lambda_0 - \lambda_1^C)}{b(\lambda_0 - \lambda_1^C)} \prod_{j=1}^{L} a(\lambda_1^C - \mu_j) & \widetilde{N}_1^{(C)} & = \frac{c(\lambda_1^C - \lambda_0)}{b(\lambda_1^C - \lambda_0)} \prod_{j=1}^{L} b(\lambda_1^C - \mu_j) \; , \nonumber \\
\end{align}
and the solution of (\ref{AD1}) can be obtained as follows. By eliminating the term $S_1 (\lambda_0 | \lambda_1^B)$
from (\ref{AD1}), we obtain an equation involving only the terms $S_1 (\lambda_1^C | \lambda_1^B)$
and $S_1 (\lambda_1^C | \lambda_0)$. More precisely, we are left with the identity
\<
\label{SS}
\frac{b(\lambda_1^B - \lambda_1^C) S_1 (\lambda_1^C | \lambda_1^B)} {\left[ \prod_{j=1}^{L} \frac{a(\lambda_1^B - \mu_j)}{a(\lambda_1^C - \mu_j)}  - \prod_{j=1}^{L} \frac{b(\lambda_1^B - \mu_j)}{b(\lambda_1^C - \mu_j)} \right]} = \frac{b(\lambda_0 - \lambda_1^C) S_1 (\lambda_1^C | \lambda_0)} {\left[ \prod_{j=1}^{L} \frac{a(\lambda_0 - \mu_j)}{a(\lambda_1^C - \mu_j)}  - \prod_{j=1}^{L} \frac{b(\lambda_0 - \mu_j)}{b(\lambda_1^C - \mu_j)} \right]} \; ,
\>
where the variable $\lambda_1^C$ now assumes the role of a parameter. The relation (\ref{SS}) is then readily solved by
\<
\label{SS1}
S_1 (\lambda_1^C | \lambda_1^B ) = \frac{F(\lambda_1^C)}{b(\lambda_1^C - \lambda_1^B)}
\left[ \prod_{j=1}^{L} \frac{a(\lambda_1^B - \mu_j)}{a(\lambda_1^C - \mu_j)} - \prod_{j=1}^{L} \frac{b(\lambda_1^B - \mu_j)}{b(\lambda_1^C - \mu_j)} \right] \; ,
\>
where $F$ is an arbitrary function. Next we substitute the expression (\ref{SS1}) into the first Eq. 
of (\ref{AD1}), and after eliminating an overall factor we are left with the relation
\<
\label{SS2}
&& F(\lambda_1^C) \left[ \prod_{j=1}^{L} \frac{a(\lambda_1^B - \mu_j) b(\lambda_0 - \mu_j)}{b(\lambda_1^C - \mu_j)} - \prod_{j=1}^{L} \frac{a(\lambda_0 - \mu_j) b(\lambda_1^B - \mu_j)}{b(\lambda_1^C - \mu_j)} \right] \nonumber \\
&& - F(\lambda_0) \left[ \prod_{j=1}^{L} \frac{a(\lambda_1^B - \mu_j) a(\lambda_1^C - \mu_j)}{a(\lambda_0 - \mu_j)} - \prod_{j=1}^{L} \frac{a(\lambda_1^C - \mu_j) b(\lambda_1^B - \mu_j)}{b(\lambda_0 - \mu_j)} \right] = 0 \; .
\>
As far as the Eq. (\ref{SS2}) is concerned, the variable $\lambda_1^B$ is merely a parameter which can be set
at our convenience. In particular, for the specialisation $\lambda_1^B = \mu_k - \gamma$ the Eq. (\ref{SS2})
simplifies to 
\<
\label{SS3}
\frac{F(\lambda_0)}{\prod_{j=1}^{L} a(\lambda_0 - \mu_j) b(\lambda_0 - \mu_j)} = \frac{F(\lambda_1^C)}{\prod_{j=1}^{L} a(\lambda_1^C - \mu_j) b(\lambda_1^C - \mu_j)} \; .
\>
Eq. (\ref{SS3}) can then be easily solved and we find
\<
\label{ff}
F(\lambda) = \Omega \prod_{j=1}^{L} a(\lambda - \mu_j) b(\lambda - \mu_j)
\>
where $\Omega$ is a $\lambda$ independent parameter. Then the combination of (\ref{SS1}) and (\ref{ff}) yields the 
expression
\<
\label{sol1}
S_1 (\lambda_1^C | \lambda_1^B) = \frac{c(\lambda_1^C - \lambda_1^B)}{b(\lambda_1^C - \lambda_1^B)} \left[ \prod_{j=1}^{L} a(\lambda_1^B - \mu_j) b(\lambda_1^C - \mu_j) - \prod_{j=1}^{L} a(\lambda_1^C - \mu_j) b(\lambda_1^B - \mu_j) \right] \; , \nonumber \\
\>
where the constant $\Omega$ has been fixed by the asymptotic behaviour (\ref{SSinf}).

%%%%%%%%%%%%%%%%%%%%%%%%%%%%%%%%%%%%%%%%%%%%%%%%%%%
%%%%%%%%%%%%%%%%%%%%%%%%%%%%%%%%%%%%%%%%
\bibliographystyle{hunsrt}
\bibliography{references}

\end{document}